\newcommand{\bA}{ {\boldsymbol A} }
\newcommand{\bB}{ {\boldsymbol B} }
\newcommand{\bC}{ {\boldsymbol C} }
\newcommand{\bD}{ {\boldsymbol D} }
\newcommand{\bH}{ {\boldsymbol H} }
\newcommand{\bI}{ {\boldsymbol I} }
\newcommand{\bm}{ {\boldsymbol m} }
\newcommand{\bM}{ {\boldsymbol M} }
\newcommand{\bS}{ {\boldsymbol S} }
\newcommand{\bu}{ {\boldsymbol u} }
\newcommand{\bU}{ {\boldsymbol U} }
\newcommand{\bw}{ {\boldsymbol w} }
\newcommand{\bW}{ {\boldsymbol W} }
\newcommand{\bx}{ {\boldsymbol x} }
\newcommand{\bX}{ {\boldsymbol X} }
\newcommand{\by}{ {\boldsymbol y} }
\newcommand{\bz}{ {\boldsymbol z} }
\newcommand{\bgamma}{ {\boldsymbol \gamma} }
\newcommand{\bGamma}{ {\boldsymbol \Gamma} }
\newcommand{\bdelta}{ {\boldsymbol \delta} }
\newcommand{\bLambda}{ {\boldsymbol \Lambda} }
\newcommand{\bmu}{ {\boldsymbol \mu} }
\newcommand{\bSigma}{ {\boldsymbol \Sigma} }
\newcommand{\bzero}{ {\boldsymbol 0} }
\newcommand{\given}{\,|\,}
\newtheorem{theorem}{Theorem}[section]
\newtheorem{lemma}[theorem]{Lemma}
\newenvironment{proof}[1][Proof]{\begin{trivlist}
\item[\hskip \labelsep {\bfseries #1}]}{\end{trivlist}}
\newcommand{\qed}{\nobreak \ifvmode \relax \else
      \ifdim\lastskip<1.5em \hskip-\lastskip
      \hskip1.5em plus0em minus0.5em \fi \nobreak
      \vrule height0.75em width0.5em depth0.25em\fi}
\title{Bayesian Regression with Undirected Network Predictors with an Application to Brain Connectome Data}
\author{Sharmistha Guha and Abel Rodriguez}
\date{} 
\begin{document}
\maketitle





\begin{abstract}
This article proposes a Bayesian approach to regression with a continuous scalar response and an undirected network predictor. Undirected network predictors are often expressed in terms of symmetric adjacency matrices, with rows and columns of the matrix representing the nodes, and zero entries signifying no association between two corresponding nodes. Network predictor matrices are typically vectorized prior to any analysis, thus failing to account for the important structural information in the network. 
This results in poor inferential and predictive performance in presence of small sample sizes.
We propose a novel class of \emph{network shrinkage priors} for the coefficient corresponding to the undirected network predictor. The proposed framework is devised to detect both nodes and edges in the network predictive of the response. Our framework is implemented using an efficient Markov Chain Monte Carlo algorithm. Empirical results in simulation studies illustrate strikingly superior inferential and predictive gains of the proposed framework in comparison with the ordinary high dimensional Bayesian shrinkage priors and penalized optimization schemes. We apply our method to a brain connectome dataset that contains information on brain networks along with a measure of creativity for multiple individuals. Here, interest lies in building a regression model of the creativity measure on the network predictor to identify important
regions and connections in the brain strongly associated with creativity. To the best of our knowledge, our approach is the first principled Bayesian method that is able to detect scientifically interpretable regions and connections in the brain actively impacting the continuous response (creativity) in the presence of a small sample size.
\end{abstract}

\noindent \emph{Keywords:} Brain connectome; Edge selection; High dimensional regression; Network predictors; Network shrinkage prior; Node selection.

\section{Introduction}\label{intro}
In recent years, network data has become ubiquitous in disciplines as diverse as neuroscience, genetics, finance and economics. Nonetheless, statistical models that involve network data are particularly challenging, not only because they require dimensionality reduction procedures to effectively deal with the large number of pairwise relationships, but also because flexible formulations are needed to account for the topological structure of the network.


The literature has paid heavy attention to models that aim to understand the relationship between node-level covariates and the structure of the network.  A number of classic models treat the dyadic observations as the response variable, examples include random graph models \citep{erdos1960evolution}, exponential random graph models \citep{frank1986markov}, social space models \citep{hoff2002latent,hoff2005bilinear,hoff2009hierarchical} and stochastic block models \citep{nowicki2001estimation}.  The goal of these models is often either to predict unobserved links or to investigate  \textit{homophily}, i.e., the process of formation of social ties due to matching individual traits.  Alternatively, models that investigate \textit{influence} or \textit{contagion} attempt to explain the node-specific covariates as a function of the network structure (e.g., see \citealp{christakis2007spread}; \citealp{fowler2008dynamic}; \citealp{shoham2015modeling} and references therein).  Common methodological approaches in this context include simultaneous autoregressive (SAR) models (e.g., see \citealp{lin2010identifying}) and threshold models (e.g., see \citealp{watts2009threshold}).  However, ascertaining the direction of a causal relationship between network structure and link or nodal attributes, i.e., whether it pertains to homophily or contagion, is difficult (e.g., see \citealp{doreian2001causality} and \citealp{shalizi2011homophily} and references therein).  Hence, there has been a growing interest in joint models for the coevolution of the network structure and nodal attributes (e.g., see \citealp{fosdick2015testing,durante2017bayesian,de2010obesity,niezink2016,guhanrod2017}).

In this paper we investigate Bayesian models for network regression.  Unlike the problems discussed above, in network regression we are interested in the relationship between the structure of the network and one or more global attributes of the experimental unit on which the network data is collected.  As a motivating example, we consider the problem of predicting the composite creativity index of individuals on the basis of neuroimaging data reassuring the connectivity of different brain regions.  The goal of these studies is twofold.  First, neuroscientists are interested in identifying regions of the brain that are involved in creative thinking.  Secondly, it is important to determine how the strength of connection among these influential regions affects the level of creativity of the individual.  More specifically, we construct a novel Bayesian \textit{network shrinkage prior} that combines ideas from spectral decomposition methods and spike-and-slab priors to generate a model that respects the structure of the predictors.  The model produces accurate predictions, allows us to identify both nodes and links that have influence on the response, and yield well-calibrated interval estimates for the model parameters.

A common approach to network regression is to use a few summary measures from the network in the context of a flexible regression or classification approach (see, for example, \citealp{bullmore2009complex} and references therein).  Clearly, the success of this approach is highly dependent on selecting the right summaries to include.  Furthermore, this kind of approach cannot identify the impact of specific nodes on the response, which is of clear interest in our setting.  Alternatively, a number of authors have proceeded to vectorize the network predictor (originally obtained in the form of a symmetric matrix). Subsequently, the continuous response would be regressed on the high dimensional collection of edge weights (e.g., see \citealp{richiardi2011decoding} and \citealp{craddock2009disease}).  This approach can take advantage of the recent developments in high dimensional regression, consisting of both penalized optimization \citep{tibshirani1996regression} and Bayesian shrinkage \citep{park2008bayesian,carvalho2010horseshoe,armagan2013generalized}.  However, this approach treats the links of the network as if they were exchangeable, ignoring the fact that coefficients that involve common nodes can be expected to be correlated a priori. Ignoring this correlation often leads to poor predictive performance and can potentially impact model selection.

Recently, \citealp{relion2017network} proposed a penalized optimization scheme that not only enables classification of networks, but also identifies important nodes and edges. Although this model seems to perform well for prediction problems, uncertainty quantification is difficult because standard bootstrap methods are not consistent for Lasso-type methods (e.g., see \citealp{kyung2010penalized} and \citealp{chatterjee2010asymptotic}). Modifications of the bootstrap that produce well-calibrated confidence intervals in the context of standard Lasso regression have been proposed (e.g., see \citealp{chatterjee2011bootstrapping}), but it is not clear whether they extend to the kind of group Lasso penalties discussed in \citealp{relion2017network}. Recent developments on tensor regression (e.g., see \citealp{zhou2013tensor,guhaniyogi2017bayesian}) are also relevant to our work. However, these approaches tend to focus mainly on prediction and identification of important edges, but are not designed to detect important nodes impacting the response.

The rest of the article evolves as follows. 
Section \ref{sec3} proposes the novel network shrinkage prior and discusses posterior computation for the proposed model. Empirical investigations with various simulation studies are presented in Section \ref{sec4}, while Section \ref{sec5} analyzes the brain connectome dataset. We provide results on \emph{region of interest} (ROI) and \emph{edge} selection and find them to be scientifically consistent with previous studies. Finally, Section \ref{sec6} concludes the article with an eye towards future work.

\section{Model Formulation}\label{sec3}
Let $y_i\in\mathbb{R}$ and $\bA_i$ represent the observed continuous scalar response and the corresponding weighted undirected network for the $i$th sample, $i=1,...,n$ respectively. All graphs share the same labels on their nodes. For example, in our brain connectome application discussed subsequently, $y_i$ corresponds to a phenotype, while $\bA_i$ encodes the network connections between different regions of the brain for the $i$th individual. Let the network corresponding to any individual consist of $V$ nodes. Mathematically, this amounts to $\bA_i$ being a $V \times V$ matrix, with the $(k,l)$th entry of $\bA_i$ denoted by $a_{i,k,l}\in\mathbb{R}$. We focus on networks that contain no self relationship, i.e., $a_{i,k,k} \equiv 0$, and are undirected ($a_{i,k,l}=a_{i,l,k}$). The brain connectome application considered here naturally justifies these assumptions. Although we present our model specific to these settings, it will be evident that the proposed model can be easily extended to directed networks with self-relations.

\subsection{Bayesian Network Regression Model}\label{sec222}

We propose the high dimensional regression model of the response $y_i$ for the $i$-th individual on the undirected network predictor $\bA_i$ as
\begin{align}\label{initial_model}
y_i=\mu+\langle\bA_i, \bB\rangle_F+\epsilon_i,\:\:\epsilon_i\stackrel{iid}{\sim} N(0,\tau^2),
\end{align}
where $\bB$ is the network coefficient matrix of dimension $V\times V$ whose $(k,l)$th element is given by $\beta_{k,l}$ and $\langle\bA_i, \bB\rangle_F = Trace(\bB'\bA_i)$ denotes the Frobenius inner product between $A_i$ and $B$. The Frobenius inner product is the natural inner product in the space of matrices and is a generalization of the dot product from vector to matrix spaces. 
Similar to the network predictor, the network coefficient matrix $\bB$ is assumed to be symmetric with zero diagonal entries. The parameter $\tau^2$ is the variance of the observational error.
Since self relationship is absent and both $\bA_i$ and $\bB$ are symmetric, $\langle\bA_i, \bB\rangle_F=2\sum\limits_{1 \leq k < l \leq V} a_{i,k,l} \beta_{k,l}$. Then, denoting $\gamma_{k,l}=2\beta_{k,l}$, (\ref{initial_model}) can be rewritten as
\begin{align}\label{model1}
y_i = \mu+\sum\limits_{1 \leq k < l \leq V} a_{i,k,l} \gamma_{k,l} + \epsilon_i,\: \epsilon_i \sim N(0, \tau^2),
\end{align}
Equation (\ref{model1}) connects the network regression model with the linear regression framework with $a_{i,k,l}$'s as predictors and $\gamma_{k,l}$'s as the corresponding coefficients. However, while in ordinary linear regression the predictor coefficients are indexed by the natural numbers $\mathcal{N}$, Model (\ref{model1}) indexes the predictor coefficients by their positions in the matrix $\bB$. This is done in order to keep a tab not only on the edge itself but also on the nodes connecting the edges. 

\subsection{Developing the Network Shrinkage Prior}\label{sec223}
\subsubsection{Vector Shrinkage Prior}
High dimensional regression with vector predictors has recently been of interest in Bayesian statistics.
An overwhelming literature in Bayesian statistics in the last decade has focused on shrinkage priors which shrink
coefficients corresponding to unimportant variables to zero while minimizing the shrinkage of coefficients corresponding to influential variables. Many of these shrinkage
prior distributions can be expressed as a scale mixture of normal distributions, commonly referred to as
\emph{global-local scale mixtures} (\citealp{polson2010shrink}), that enable fast computation employing
simple conjugate Gibbs sampling. More precisely, in the context of model (\ref{model1}), a global-local scale mixture prior
would take the form
\begin{align*}
\gamma_{k,l} \sim N(0,s_{k,l}\tau^2),\:\:\:\:s_{k,l} \sim g_1,\:\:\:\:\tau^2 \sim g_2 ,\:\:\:\:1 \leq k < l \leq V,
\end{align*}
Note that $s_{1,2},...,s_{V-1,V}$ are local scale parameters controlling the shrinkage of the coefficients, while $\tau^2$ is the global scale parameter.
Different choices of $g_1$ and $g_2$ lead to different classes of Bayesian shrinkage priors which have appeared in the literature. For example, the Bayesian Lasso (\citealp{park2008bayesian}) prior takes $g_1$ as exponential and $g_2$ as the Jeffreys prior, the Horseshoe prior (\citealp{carvalho2010horseshoe}) takes both $g_1$ and $g_2$ as half-Cauchy distributions and the Generalized Double Pareto Shrinkage prior (\citealp{armagan2013generalized}) takes $g_1$ as exponential and $g_2$ as the Gamma distribution.

The direct application of this global-local prior in the context of (\ref{model1}) is unappealing. To elucidate further, note that if node $k$ contributes minimally to the response, one would expect to
have smaller estimates for all coefficients $\gamma_{k,l}$, $l>k$ and  $\gamma_{l',k}$, $l'< k$ corresponding to edges connected to node $k$. The ordinary global-local shrinkage prior distribution given as above does not necessarily conform to such an important restriction.
In what follows, we build a network shrinkage prior upon the existing literature that respects this constraint as elucidated in the next section.

\subsubsection{Network Shrinkage Prior}\label{network_shrinkage}
We propose a shrinkage prior on the coefficients $\gamma_{k,l}$ and refer to it as the \emph{Bayesian Network Shrinkage prior}. The prior borrows ideas from low-order spectral representations of matrices.
Let $\bu_1,...,\bu_V\in\mathbb{R}^R$ be a collection of $R$-dimensional latent variables, one for each node, such that $\bu_k$ corresponds to node $k$. We draw each $\gamma_{k,l}$ conditionally independent from a density that can be represented as a location and scale mixture of normals. More precisely,
\begin{align}\label{first-stage}
\gamma_{k,l}|s_{k,l},\bu_k,\bu_l,\tau^2\sim N(\bu_k' \bLambda \bu_l,\tau^2 s_{k,l}),
\end{align}
where $s_{k,l}$ is the scale parameter corresponding to each $\gamma_{k,l}$ and $\bLambda=\mbox{diag}(\lambda_1,...,\lambda_R)$ is an $R\times R$ diagonal matrix. The diagonal elements $\lambda_r\in\{0,1\}$'s are introduced to assess the effect of the $r$th dimension of $\bu_k$ on the mean of $\gamma_{k,l}$. In particular, $\lambda_r=0$ implies that the $r$th dimension of the latent variable $\bu_k$ is not informative for any $k$. Note that if $s_{k,l}=0$, this prior will imply that $\bGamma=2\bB=\bU'\bLambda\bU$, where $\bU$ is an $R\times V$ matrix whose $k$th column corresponds to $\bu_k$ and $\bGamma=((\gamma_{k,l}))_{k,l=1}^{V}$. Since $R \ll V$, the mean structure of $\bGamma$ assumes a low-rank matrix decomposition.

In order to learn which components of $\bu_k$ are informative for (\ref{first-stage}), we assign a hierarchical prior
\begin{align*}
\lambda_r \sim Ber (\pi_{r}),\:\:\:\:\pi_{r} \sim Beta(1, r^{\eta}),\:\:\:\:\eta>1.
\end{align*}

The choice of hyper-parameters of the beta distribution is crucial in order to impart increasing shrinkage on $\lambda_r$ as $r$ grows.
In particular, note that $E[\lambda_r]=1/(1+r^{\eta})\rightarrow 0$, as $r\rightarrow\infty$, so that the prior favors choice of smaller number of active components in $\bu_k$'s impacting the response. Note that $\sum_{r=1}^{R}\lambda_r$ is the number of dimensions of $\bu_k$ contributing to predict the response. We refer to $\sum_{r=1}^{R}\lambda_r$ as $R_{eff}$, the \emph{effective dimensionality} of the latent variables. The choice of prior hyperparameters of $\lambda_r$ ensures that $var(\gamma_{k,l})$'s remain finite even as $R\rightarrow\infty$. In fact, if one assumes $var(\bu_k)<\infty$, i.e., a proper prior is assigned on $\bu_k$ (which will hold in our case as will be evident later), some routine algebra yields that $var(\gamma_{k,l})$ is finite when $R\rightarrow\infty$ if $\sum_{r=1}^{R}var(\lambda_r)<\infty$ as $R\rightarrow\infty$. In fact, the hyperparameters of the beta prior on $\pi_r$ are such that
$\sum_{r=1}^{R}var(\lambda_r)=\sum_{r=1}^{R}\frac{r^{\eta}}{(1+r^{\eta})^2(2+r^{\eta})}<\infty$ as $R\rightarrow\infty$.


The mean structure of $\gamma_{k,l}$ is constructed to take into account the interaction between the $k$th and the $l$th nodes. 
Drawing intuition from \citealp{hoff2005bilinear}, we might imagine that the interaction between the $k$th and $l$th nodes has a positive, negative or neutral impact on the response depending on whether $\bu_k$ and $\bu_l$ are in the same direction, opposite direction or orthogonal to each other respectively. In other words, whether the angle between
$\bu_k$ and $\bu_l$ is acute, obtuse or right, i.e., $\bu_k'\bLambda\bu_l>0$, $\bu_k'\bLambda\bu_l<0$ or $\bu_k'\bLambda\bu_l=0$ respectively. The conditional mean
of $\gamma_{k,l}$ in (\ref{first-stage}) is constructed to capture such latent network information in the predictor.

In order to select active nodes (i.e., to determine if a node is inactive in explaining the response), we assign the \emph{spike-and-slab}  (\citealp{ishwaran2005spike})  mixture prior on the latent factor $\bu_k$ as below
\begin{align}
\bu_k\sim\left\{\begin{array}{cc}
N(\bzero, \bM), & \mbox{if}\:\: \xi_k=1\\
 \delta_{\bzero}, & \mbox{if}\:\: \xi_k=0\\
 \end{array}
 \right.,\:\:\:\: \xi_k\sim Ber(\Delta),
\end{align}
where $\delta_{\bzero}$ is the Dirac-delta function at $\bzero$ and $\bM$ is a covariance matrix of order $R\times R$. The parameter $\Delta$ corresponds to the probability of the nonzero mixture component. Note that if the $k$th node of the network predictor is inactive in predicting the response, then a-posteriori $\xi_k$ should provide high probability to $0$. Thus, based on the posterior probability of $\xi_k$, it will be possible to identify unimportant nodes in the network regression.
The rest of the hierarchy is accomplished by assigning prior distributions on the
$s_{k,l}$'s and $\bM$ as follows:  
\begin{align*}
s_{k,l} \sim Exp(\theta^2/2),\:\:
 \bM \sim IW(\bS,\nu),\:\:
\theta^2 \sim Gamma(\zeta,\iota),\:\:
 \Delta \sim Beta(a_{\Delta},b_{\Delta}),
\end{align*}
where $\bS$ is an $R\times R$ positive definite scale matrix. $IW(\bS,\nu)$ denotes an Inverse-Wishart distribution with scale matrix $\bS$ and degrees of freedom $\nu$.  Finally, we choose a non-informative prior on $(\mu, \tau^2)$ such that $p(\mu, \tau^2) \propto \frac{1}{\tau^2}$. Appendix B shows the propriety of the posterior distribution under this prior.

Note that, if $\bu_k'\bLambda\bu_l=0$, the marginal prior distribution of $\gamma_{k,l}$ integrating out all the latent variables turns out to be the double exponential distribution which is connected to the Bayesian Lasso prior. When 
$\bu_k'\bLambda\bu_l \neq 0$, the marginal prior distribution of $\gamma_{k,l}$ appears to be a location mixture of double exponential prior distributions, the mixing distribution being a function of the network. Owing to this fact, we coin the proposed prior distribution as the \emph{Bayesian Network Lasso} prior.

\subsection{Posterior Computation}
Although summaries of the posterior distribution cannot be computed in closed form, full conditional distributions for all parameters are available and correspond to standard families. Thus the posterior computation of parameters proceeds through Gibbs sampling. Details of all the full conditionals are presented in Appendix A.

In order to identify whether the $k$th node is important in terms of predicting the response, we rely on the post burn-in $L$ samples $\xi_k^{(1)},....,\xi_k^{(L)}$ of $\xi_k$. Node $k$ is recognized to be influential if $\frac{1}{L}\sum_{l=1}^{L}\xi_k^{(l)}>0.5$.
On the other hand, an estimate of $P(R_{eff}=r\given Data)$ is given by $\frac{1}{L}\sum_{l=1}^{L}I(\sum_{m=1}^{R}\lambda_m^{(l)}=r)$, where $I(A)$ for an event $A$ is 1 if the event $A$ happens and $0$ otherwise and $\lambda_m^{(1)},...,\lambda_m^{(L)}$ are the $L$ post burn-in MCMC samples of $\lambda_m$.

\section{Simulation Studies}\label{sec4}
This section comprehensively contrasts both the inferential and the predictive performances of our proposed approach with a number of competitors in various simulation settings. We refer to our proposed approach as the \emph{Bayesian Network Regression} (BNR). As competitors, we consider both penalized likelihood methods as well as Bayesian shrinkage priors for high-dimensional regression.

Our first set of competitors use generic variable selection and shrinkage methods that treat edges between nodes as ``bags of predictors" to run high dimensional regression, thereby ignoring the relational nature of the predictor. More specifically, we use Lasso (\citealp{tibshirani1996regression}), which is a popular penalized optimization scheme, and the Bayesian Lasso (\citealp{park2008bayesian}) and Horseshoe priors (\citealp{carvalho2010horseshoe}), which are popular Bayesian shrinkage regression methods. In particular, the Horseshoe is considered to be the state-of-the-art Bayesian shrinkage prior and is known to perform well, both in sparse and not-so-sparse regression settings. We use the \texttt{glmnet} package in \texttt{R} (\citealp{friedman2010regularization}) to implement Lasso regression, and the \texttt{monomvn} package  in \texttt{R} (\citealp{gramacy2013package}) to implement the Bayesian Lasso (BLasso for short) and the Horseshoe (BHS for short).
A thorough comparison with these methods will indicate the relative advantage of exploiting the structure of the network predictor.

Additionally, we compare our method to a frequentist approach that develops network regression in the presence of a network predictor and scalar response (\citealp{relion2017network}). To be precise, we adapt \citealp{relion2017network} to a \emph{continuous response} context and propose to estimate the network regression coefficient matrix $\bB$ by solving
\begin{align}\label{relion}
&\hat{\bB}=\arg\min\limits_{\bB\in\mathbb{R},\bB=\bB',diag(\bB)={\boldsymbol 0}}\Bigg\{\frac{1}{n}\sum_{i=1}^{n}(y_i-\mu-\langle \bA_i,\bB\rangle_F)^2+\nonumber\\
&\qquad\qquad\qquad\qquad\qquad\qquad\frac{\varphi}{2}||\bB||_F^2+\varsigma\left(\sum_{k=1}^V||\bB_{(k)}||_2+\rho||\bB||_1\right)\Bigg\},
\end{align}
where $||\bB||_F=\sqrt{\langle\bB, \bB\rangle_F}$ denotes the Frobenius norm, $||\bB||_1$ is the sum of the absolute values of all the elements of matrix $\bB$, $||\cdot||_2$ is the $l_2$ norm of a vector, $\bB_{(k)}$ is the $k$th row of $\bB$ and $\varphi,\rho,\varsigma$ are tuning parameters. The best possible choice of the tuning parameter triplet $(\varphi,\rho,\varsigma)$ is made using cross validation over a grid of possible values.  \citealp{relion2017network} argue that the penalty in (\ref{relion}) incorporates the network information of the predictor, thereby yielding superior inference to any ordinary penalized optimization scheme. Hence comparison with (\ref{relion}) will potentially highlight the advantages of a carefully structured Bayesian network shrinkage prior over the penalized optimization scheme incorporating network information. 
In the absence of any open source code, we implement the algorithm in \citealp{relion2017network}. 
All Bayesian competitors are allowed to draw $50,000$ MCMC samples out of which the first $30,000$ are discarded as burn-ins. Convergence is assessed by comparing different simulated sequences of representative parameters started at different initial values (\citealp{gelman2014bayesian}). All posterior inference is carried out based on the rest $20,000$ MCMC samples after suitably thinning the post burn-in chain. We monitor the auto-correlation plots and effective sample sizes of the iterates, and they are found to be satisfactorily uncorrelated.

\subsection{Predictor and Response Data Generation}\label{secgen}
In all simulation studies, the undirected symmetric network predictor $\bA_i$ for the $i$th sample is simulated by drawing $a_{i,k,l}\stackrel{iid}{\sim} N(0,1)$ for
$k<l$ and setting $a_{i,k,l}=a_{i,l,k}$ and $a_{i,k,k}=0$ for all $k,l\in\left\{1,...,V\right\}$. The response $y_i$ is generated according to the network regression model
\begin{align}\label{data_gen}
y_i = \mu_0 + \langle \bA_i,\bB_0\rangle_F + \epsilon_i;\: \epsilon_i \sim N(0, \tau_0^2),
\end{align}
with $\tau_0^2$ as the true noise variance. In a similar vein as section \ref{sec222}, define $\gamma_{k,l,0} = 2 \beta_{k,l,0}$, which will later be used to define the mean squared error. In all of our simulations, we use $V=20$ nodes and $n=70$ samples.

To study all competitors under various data generation schemes, we simulate the true network predictor coefficient $\bB_0$ under three simulation scenarios, referred to as
\emph{Simulation 1}, \emph{Simulation 2} and \emph{Simulation 3}.\\

\noindent \underline{\emph{Simulation 1}}\\
In \emph{Simulation 1}, we draw $V$ latent variables $\bw_k$, each of dimension $R_{gen}$, 
from a mixture distribution given by
\begin{align}\label{wk}
\bw_k \sim \pi_w N_{R_{gen}} (\bw_{mean}, \bw_{sd}^2) + (1 - \pi_w) \bdelta_{{\boldsymbol 0}}; \: k \in \{1,...,V\},
\end{align}
where $\bdelta_{{\boldsymbol 0}}$ is the Dirac-delta function and $\pi_w$ is the probability of any $\bw_k$ being nonzero. The elements of the network predictor coefficient
$\bB_0$ are given by $\beta_{k,l,0}= \beta_{l,k,0} = \frac{\bw_k'  \bw_l}{2}$. 
Note that if $\bw_k$ is zero, any edge connecting the $k$th node has no contribution to the regression mean function in (\ref{data_gen}), i.e., the $k$th node becomes inactive in predicting the response. Accordingly, $(1-\pi_w)$ is the probability of a node being inactive. Hence, hereafter, $(1-\pi_w)$ is referred to as the \emph{sparsity} parameter in the context of the data generation mechanism under \emph{Simulation 1}.

It is worth mentioning that in \emph{Simulation 1}, the simulated coefficient $\beta_{k,l,0}$ corresponding to the $(k,l)$th edge $a_{i,k,l}$ represents a bi-linear interaction between the latent variables corresponding to the $k$th and the $l$th nodes. Thus \emph{Simulation 1} generates $\bB_0$ respecting the network structure in $\bA_i$. 
For a comprehensive picture of \emph{Simulation 1}, we consider $9$ different cases as summarized in Table~\ref{Tab1}. In each of these cases, the network predictor and the response are generated by changing the sparsity $\pi_w$ and the true dimension $R_{gen}$ of the latent variables $\bw_k$'s. The table also presents the maximum dimension $R$ of the latent variables $\bu_k$ for the fitted network regression model (\ref{model1}). Note that the various cases also allow model mis-specification with unequal choices of $R$ and $R_{gen}$. For all simulations, $\bw_{mean}$ and $\bw_{sd}^2$ are set as $0.8 \times \mathbf{1}_{Rgen}$ and $\bI_{R_{gen}\times R_{gen}}$, respectively. Apart from investigating the model as a tool for inference and prediction, it is of interest to observe the posterior distributions of $\lambda_r$'s to judge if the model effectively learns the dimensionality $R_{gen}$ of the latent variables $\bw_k$. Noise variance $\tau_0^2$ is fixed at $1$ for all scenarios. 
\\

\noindent \underline{\emph{Simulation 2}}\\
In \emph{Simulation 2}, we draw $V$ indicator variables $\xi_1^0,...,\xi_V^0\stackrel{iid}{\sim} Ber(\pi_{w^*})$ corresponding to $V$ nodes of the network. If both $\xi_k^0=1$ and $\xi_l^0=1$, the edge coefficient $\beta_{k,l,0}$ connecting the $k$th and the $l$th nodes ($k<l$) is simulated from $N(0.8,1)$. Respecting the symmetry condition, we set $\beta_{l,k,0}=\beta_{k,l,0}$ for nonzero edge coefficients. If $\xi_k^0=0$ for any $k$, we set $\beta_{k,l,0}=\beta_{l,k,0}=0$ for any $l$.  The edge between any two nodes $k$ and $l$ does not contribute in predicting the response if either $\xi_k^0$ or $\xi_l^0$ is zero. Hence in the context of \emph{Simulation 2}, $1-\pi_{w^*}$ is referred to as the \emph{sparsity} parameter.

Under \emph{Simulation 2}, the network structure of the nonzero coefficients is lost as there is no impact on the regression function in (\ref{data_gen}) from any edge that is connected to any inactive node. The coefficient corresponding to the $(k,l)$th edge $a_{i,k,l}$ is drawn from a normal distribution with no interaction specific to nodes $k$ and $l$. Thus, the data generation scheme will presumably offer no advantage to our model. We present two cases of \emph{Simulation 2} in Table~\ref{Tabnew2}. Noise variance $\tau_0^2$ is fixed at $1$ for all scenarios. \\

\noindent \underline{\emph{Simulation 3}}\\
In \emph{Simulation 3}, we draw $V$ indicator variables $\xi_1^0,...,\xi_V^0\stackrel{iid}{\sim} Ber(\pi_{w^{**}})$ corresponding to $V$ nodes of the network. If both $\xi_k^0=1$ and $\xi_l^0=1$, then the edge coefficient $\beta_{k,l,0}$ connecting the $k$th and the $l$th nodes ($k<l$) is simulated from a mixture distribution given by
\begin{align}
\beta_{k,l,0} \sim \pi_{w^{***}} N_{R_{gen}} (0.8, 1) + (1 - \pi_{w^{***}}) \delta_{0}; \: k,l \in \{1,...,V\}.
\end{align}
Again, we set $\beta_{k,l,0}=\beta_{l,k,0}$ respecting the symmetry condition. If $\xi_k^0=0$ for any $k$, we set $\beta_{k,l,0}=\beta_{l,k,0}=0$ for any $l$. Contrary to \emph{Simulation 2}, \emph{Simulation 3} allows the possibility of an edge between the $k$th and the $l$th nodes having no impact on the response even when both $\xi_k^0$ and $\xi_l^0$ are nonzero. In the context of \emph{Simulation 3}, $(1-\pi_{w^{**}})$ and $(1-\pi_{w^{***}})$ are referred to as the \emph{node sparsity} and the \emph{edge sparsity} parameters respectively.

In the spirit of \emph{Simulation 2},  the network structure of nonzero coefficients is lost in \emph{Simulation 3} as well, and hence there is no inherent advantage of BNR. We present two cases of \emph{Simulation 3} in Table~\ref{Tabnew3}. Noise variance $\tau_0^2$ is fixed at $1$ for all scenarios. \\

\begin{table}[!th]
\begin{center}

\begin{tabular}
[c]{cccccc}
\hline
Cases & $R_{gen}$ & $R$ & Sparsity \\
\hline
Case - 1 & 2 & 2 & 0.5 \\
Case - 2 & 2 & 3 & 0.6 \\
Case - 3 & 2 & 5 & 0.3 \\
Case - 4 & 2 & 5 & 0.4 \\
Case - 5 & 3 & 5 & 0.5 \\
Case - 6 & 4 & 5 & 0.4 \\
Case - 7 & 2 & 5 & 0.5 \\
Case - 8 & 2 & 4 & 0.7\\
Case - 9 & 3 & 5 & 0.7\\
\hline
\end{tabular}
\caption{Table presents different cases for \emph{Simulation 1}. The true dimension $R_{gen}$ is the dimension of vector object $\bw_k$ using which data has been generated. The maximum dimension $R$ is the dimension of vector object $\bu_k$ using which the model has been fit. \emph{Sparsity} refers to the fraction of generated $\bw_k = \bzero$, i.e., $(1 - \pi_w)$.}\label{Tab1}
\end{center}
\end{table}

\begin{table}[!th]
\begin{center}
\begin{tabular}
[c]{cccc}
\hline
Cases & $R$ & Sparsity\\
\hline
Case - 1 & 5 & 0.7\\
Case - 2 & 5 & 0.2\\
\hline
\end{tabular}
\caption{Table presents different cases for \emph{Simulation 2}. The maximum dimension $R$ is the dimension of vector object $\bu_k$ using which the model has been fit. \emph{Simulation 2} only has a sparsity parameter $\pi_{w^*}$.}\label{Tabnew2}
\end{center}
\end{table}

\begin{table}[!th]
\begin{center}
\begin{tabular}
[c]{ccccc}
\hline
Cases & $R$ & Node Sparsity & Edge Sparsity\\
\hline
Case - 1 & 5 & 0.7 & 0.5\\
Case - 2 & 5 & 0.2 & 0.5\\
\hline
\end{tabular}
\caption{Table presents different cases for \emph{Simulation 3}. The maximum dimension $R$ is the dimension of vector object $\bu_k$ using which the model has been fit. While \emph{Simulation 2} only has a sparsity parameter, \emph{Simulation 3} has a node sparsity ($\pi_{w^{**}}$) and an edge sparsity ($\pi_{w^{***}}$) parameter respectively.}\label{Tabnew3}
\end{center}
\end{table}

\subsection{Results}\label{infff}
\noindent In all simulation results shown in this section, the BNR model is fitted with the choices of the hyper-parameters given by $\bS=\bI$, $\nu=10$, $a_{\Delta}=1$, $b_{\Delta}=1$,
$\zeta=1$ and $\iota=1$. 
Our extensive simulation studies reveal that both inference and prediction are robust to various choices of the hyper-parameters.\\

\noindent \underline{\emph{Node Selection}}\\
Figure \ref{Fig1} shows a matrix whose rows correspond to different cases in \emph{Simulation 1} and columns correspond to the nodes of the network. The dark and clear cells correspond to the truly active and inactive nodes, respectively. The posterior probability of the $k$th node being detected as active, i.e., $P(\xi_k = 1 \given Data)$  has been overlaid for all $k \in \{1,...20\}$ in all $9$ simulation cases. The plot suggests overwhelmingly accurate detection of nodes influencing the response. We provide similar plots on node detection in Figure~\ref{Fig1new} for various cases in \emph{Simulation 2} and \emph{Simulation 3}. Both figures show good detection of active nodes both in \emph{Simulations} \emph{2} and \emph{3} with very few false positives. We emphasize the fact that BNR is designed to detect important nodes while BLasso, HS or Lasso (or any other ordinary high dimensional regression technique) do not allow node selection in the present context. We have also investigated in detail the nodes selected by \citealp{relion2017network} and it is found to perform sub-optimally. For example, in Case -1 of \emph{Simulation 1}, out of $20$ nodes, the number of false positives is $9$.

\begin{figure}[!ht]
   \begin{center}
   \includegraphics[width=12 cm, height = 9 cm]{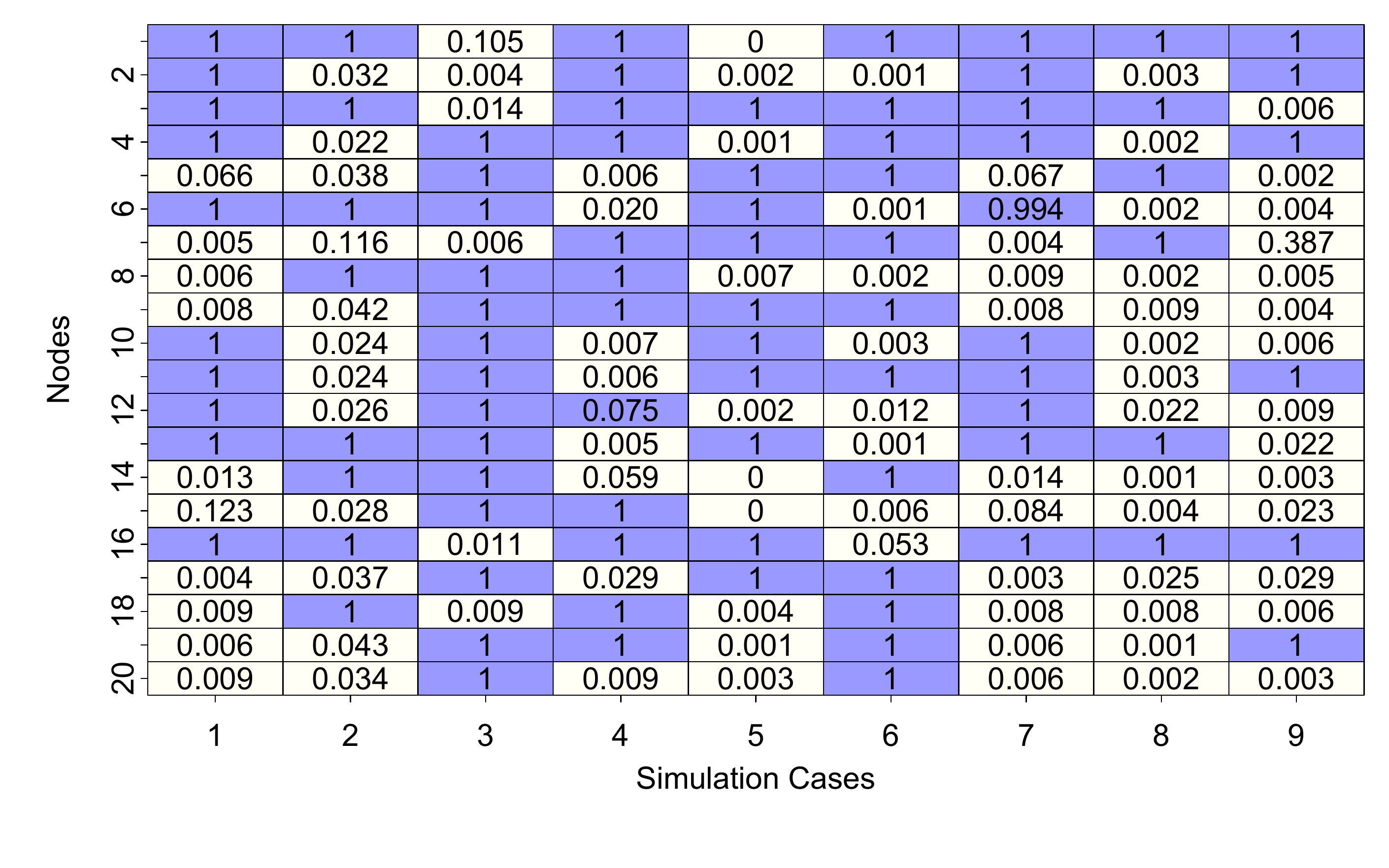}
   \end{center}
\caption{True activity status of a network node (clear blackground denotes \emph{inactive} and dark background denotes \emph{active}). Note that there are $20$ rows (corresponding to $20$ nodes) and $9$ columns corresponding to $9$ different cases in \emph{Simulation 1}. The model-detected posterior probability of being active has been super-imposed onto the corresponding node.}
\label{Fig1}
\end{figure}

\begin{figure}[!ht]
   \begin{center}
   \subfigure[Simulation 2]{\includegraphics[width=8 cm,height=9cm]{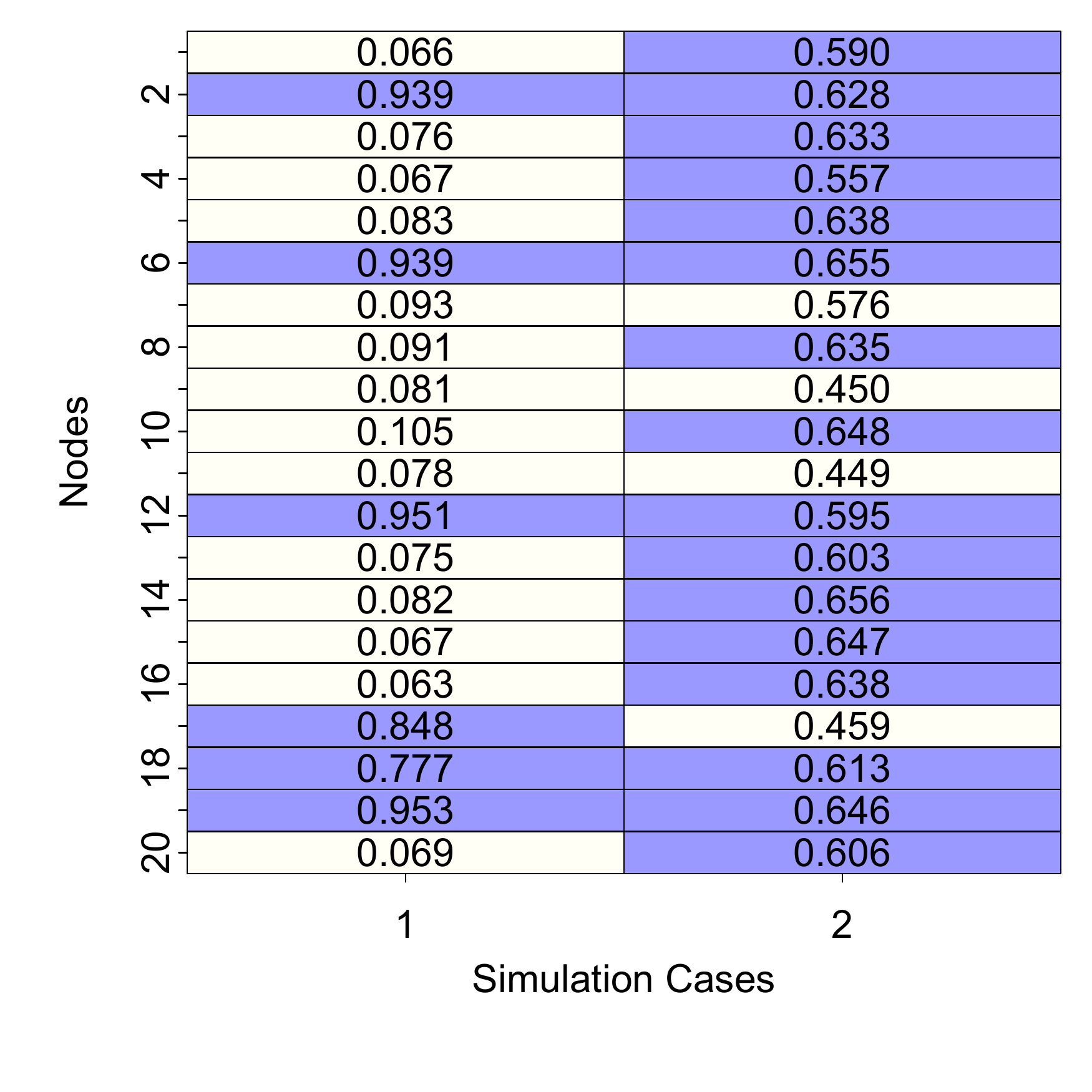}\label{corr_tpp111}}
   \subfigure[Simulation 3]{\includegraphics[width=8 cm,height=9cm]{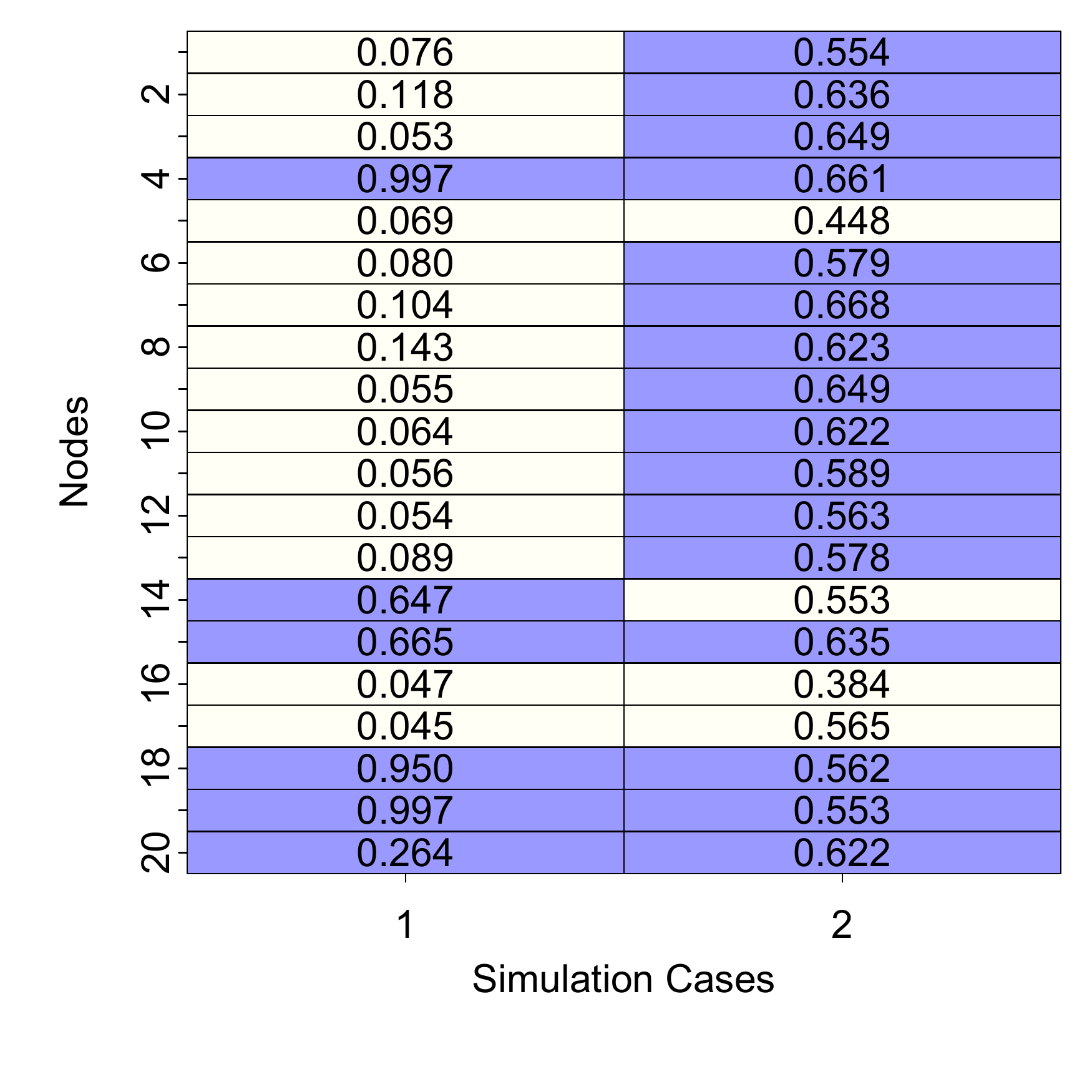}\label{corr_tpp2111}}
   \end{center}
\caption{True activity status of a network node (clear blackground denotes \emph{inactive} and dark background denotes \emph{active}). Note that there are $20$ rows (corresponding to $20$ nodes) and $2$ columns corresponding to $2$ different cases both in \emph{Simulations 2} and \emph{3}. The model-detected posterior probability of being active has been super-imposed onto the corresponding node.}
\label{Fig1new}
\end{figure}

\noindent \underline{\emph{Estimating the network coefficient}}\\
Another important aspect of the parametric inference lies in the performance in terms of estimating the network predictor coefficients. Tables~\ref{Tab2}, \ref{Tab2new}, \ref{Tab2newnew} present the MSE of all the competitors in \emph{Simulations} \emph{1}, \emph{2} and \emph{3} respectively. Given that both the fitted network regression coefficient $\bB$ and the true coefficient $\bB_0$ are symmetric, the MSE for any competitor is calculated in each dataset as $\frac{1}{(V(V-1)/2)}\sum_{k<l}(\hat{\gamma}_{k,l}-\gamma_{k,l,0})^2$, where $\hat{\gamma}_{k,l}$ is the point estimate of $\gamma_{k,l} = 2\beta_{k,l}$, $\gamma_{k,l,0}$ is the true value of the coefficient, and $\frac{V(V-1)}{2}$ is the number of elements in the upper triangular portions of the $V \times V$ matrices $\bB$ and $\bB_0$. 
For Bayesian models (such as the proposed model), $\hat{\gamma}_{k,l}$ is taken to be the posterior mean of $\gamma_{k,l}$.  

Table~\ref{Tab2} shows that the proposed Bayesian network regression (BNR) outperforms all its Bayesian and frequentist competitors in all cases of \emph{Simulation 1}. In cases 1-7 when the sparsity parameter is low to moderate, we perform overwhelmingly better than all the competitors. While BNR is expected to perform much better than BLasso, Horseshoe and Lasso due to incorporation of network information, it is important to note that the carefully chosen global-local shrinkage prior with a well formulated hierarchical mean structure seems to possess more detection and estimation power than \citealp{relion2017network}. When the sparsity parameter in \emph{Simulation 1}  is high (cases 8-9), our simulation scheme sets an overwhelming proportion of $\beta_{k,l,0}$'s to zero. As a result, BNR only slightly outperforms Horseshoe and BLasso. Again, \citealp{relion2017network} does not seem to be competitive, not only against BNR, but also against Horseshoe. 
For \emph{Simulations 2} and \emph{3}, tables~\ref{Tab2new} and \ref{Tab2newnew} demonstrate that when node sparsity is higher (i.e. when more edge coefficients are set to zero in the data generation procedure), BNR performs very similar to Horseshoe. This might be due to the fact that high degree of sparsity in the edge coefficients in the truth is conducive for ordinary high dimensional regression which treats edges as coefficients. As node sparsity decreases so that more and more edge coefficients are nonzero in the truth and the network structure in the predictors dominates, BNR tends to show more and more advantage in terms of estimating the network coefficient $\bB$.\\

\noindent \emph{\underline{Inference on the effective dimensionality}}
\newline Next, the attention turns to inferring on the posterior expected value of the effective dimensionality of $\bu_k$. Figure~\ref{effective} presents posterior probabilities of effective dimensionality in all $9$ cases in \emph{Simulation 1}. The filled bullets indicate the true value of the effective dimensionality. All 9 figures indicate that the true dimensionality of the latent variable $\bu_k$ is effectively captured by the models. Variation in the sparsity or discrepancy between $R$ and $R_{gen}$ seems to impact the performance very negligibly. Only in cases 8 and 9, in presence of less network information, there appears to be greater uncertainty in the posterior distribution of $R_{eff}$. We also investigate similar figures for \emph{Simulations 2} and \emph{3} (see Figures~\ref{effective2} and \ref{effective3}), though in the absence of any ground truth on effective dimensionality, they are less interpretable. 

\begin{figure}
  \begin{center}
    \subfigure[Case 1]{\includegraphics[width=5 cm,height=5cm]{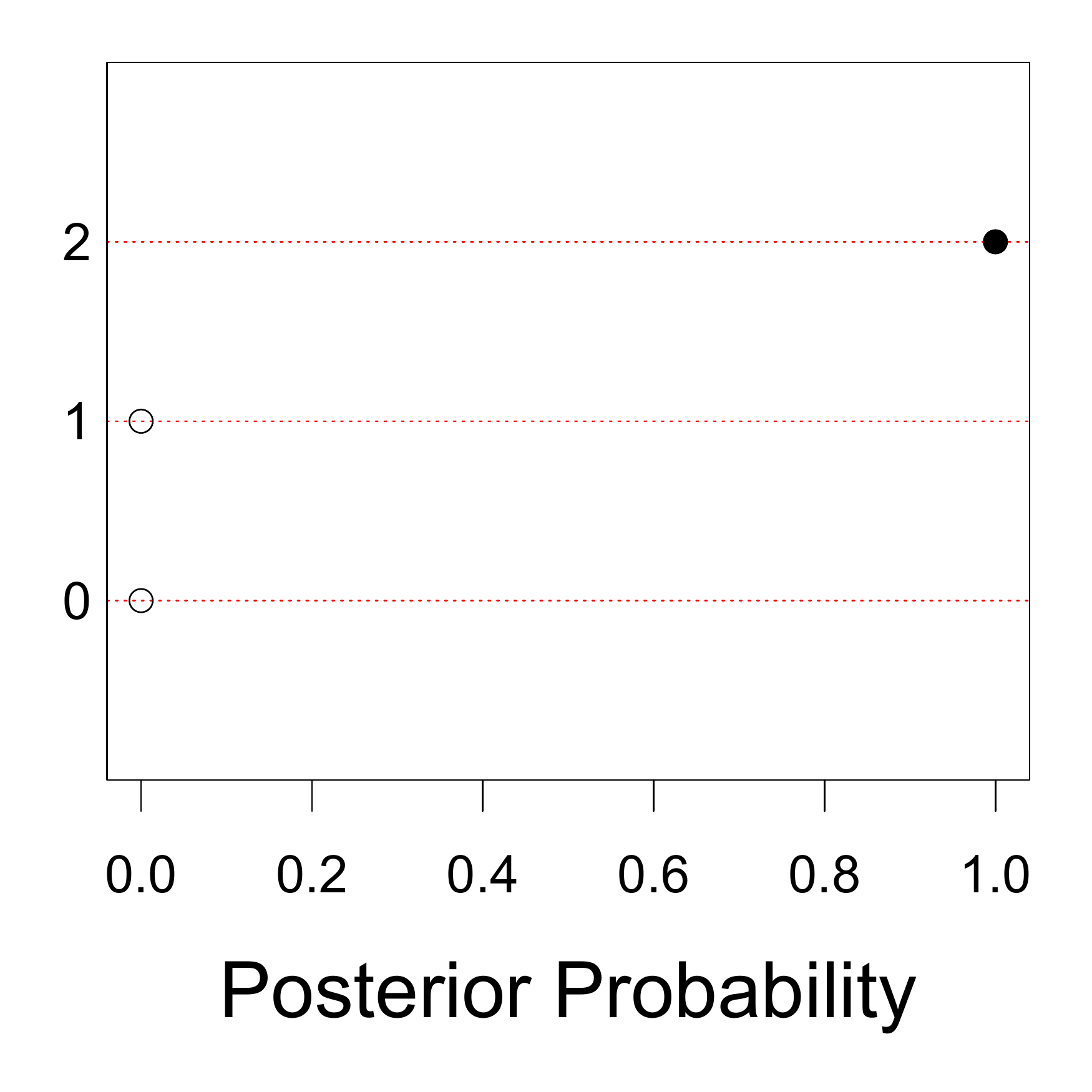}\label{corr_tpp1}}
   \subfigure[Case 2]{\includegraphics[width=5 cm,height=5cm]{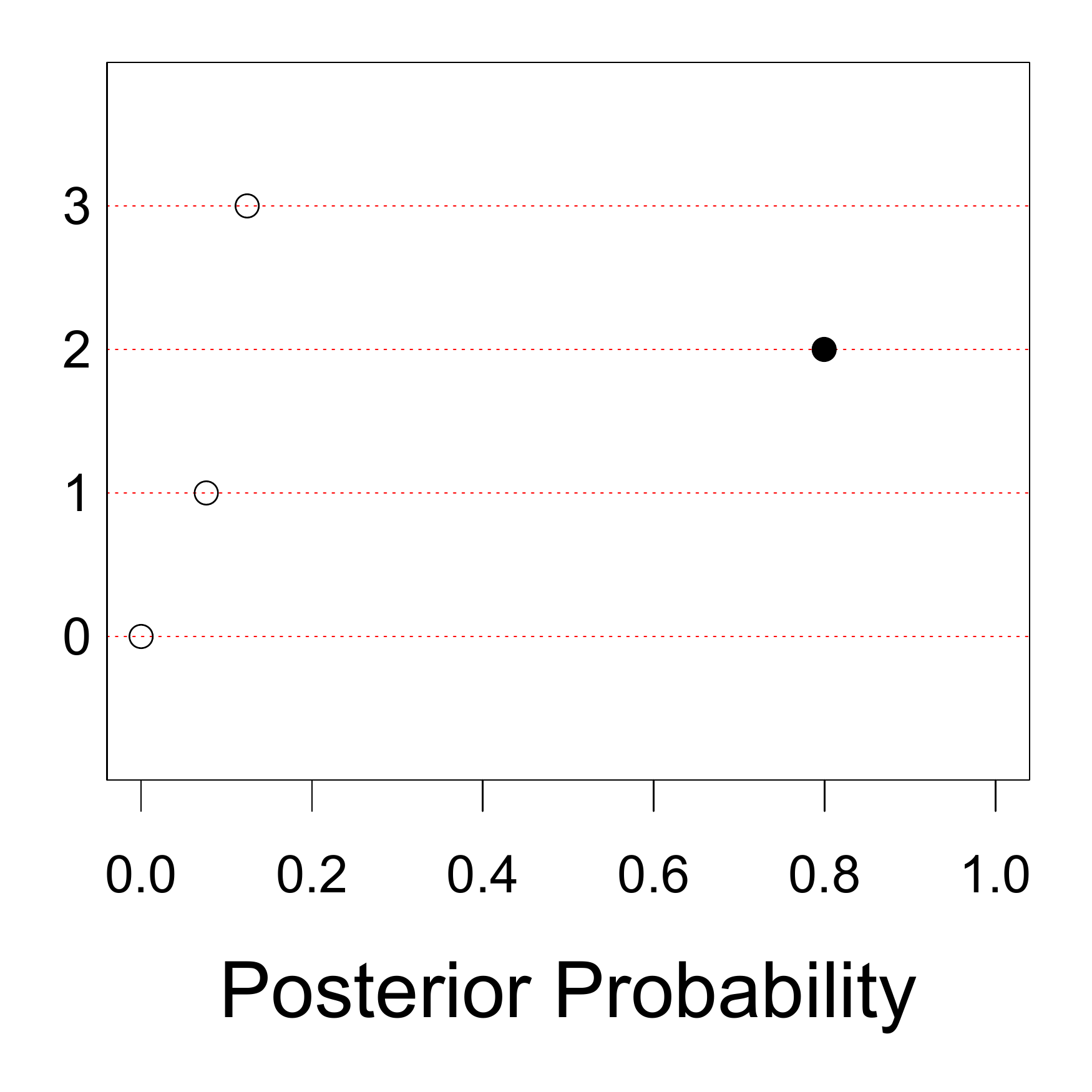}\label{corr_tpp2}}
   \subfigure[Case 3]{\includegraphics[width=5 cm,height=5cm]{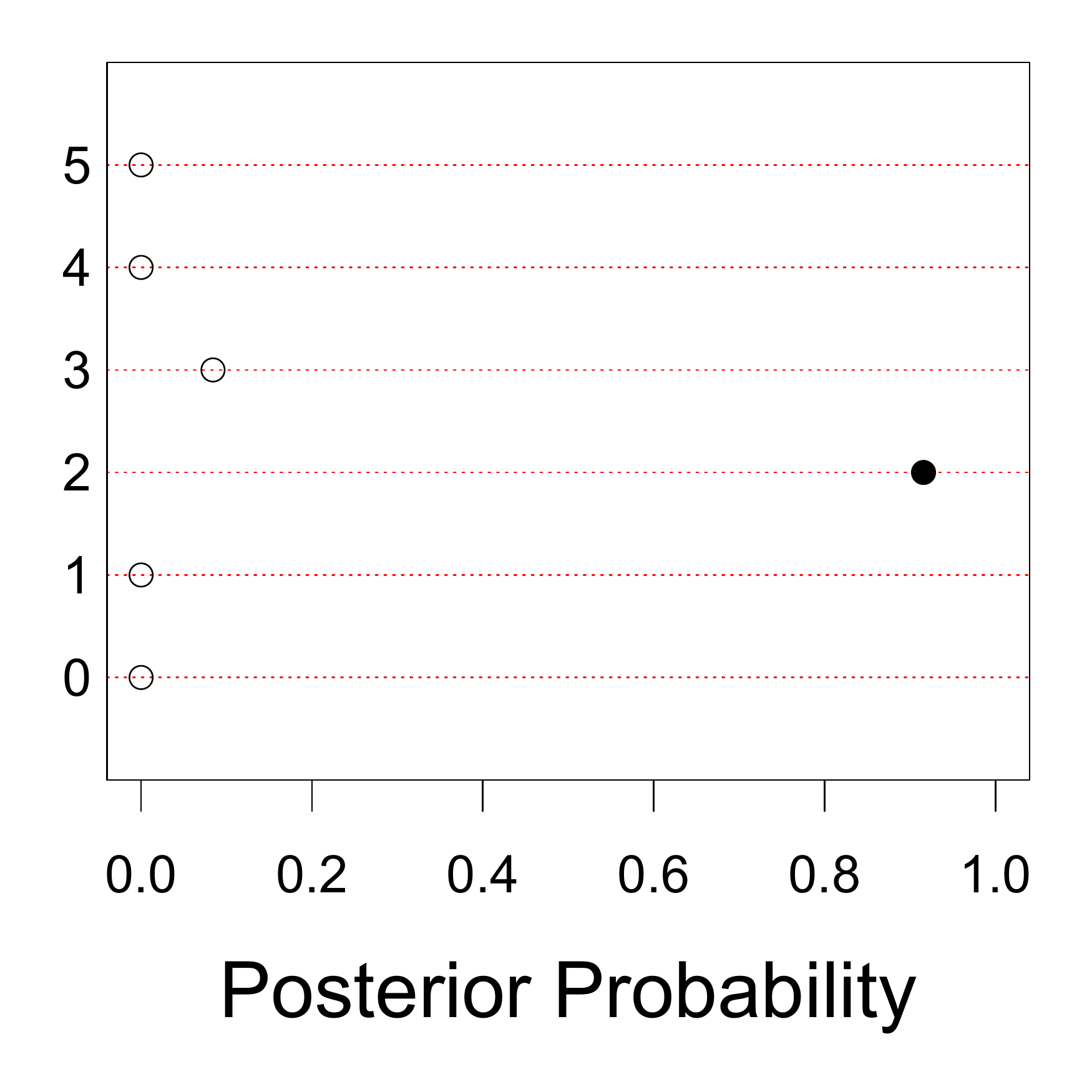}\label{corr_tppp1}}\\
   \subfigure[Case 4]{\includegraphics[width=5 cm,height=5cm]{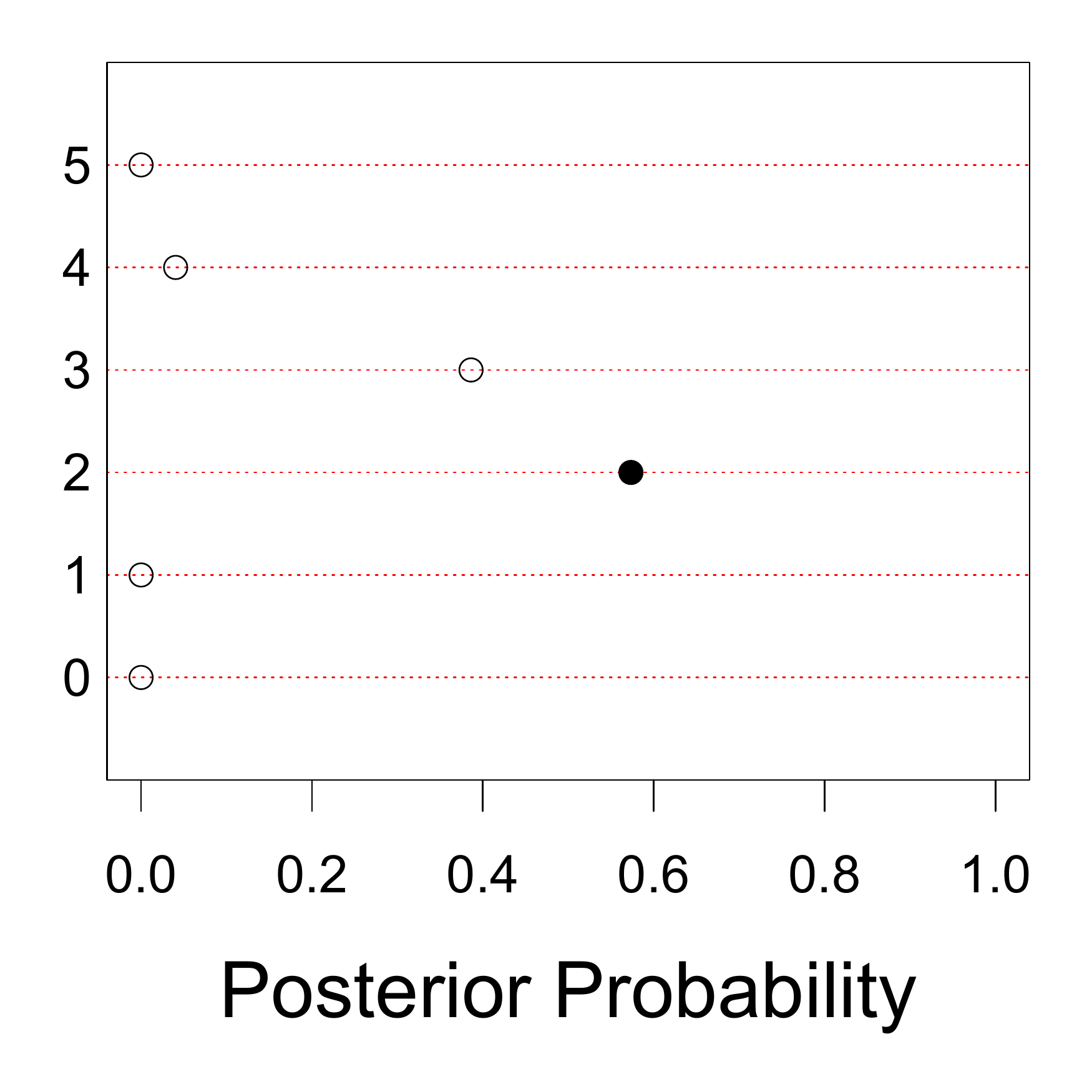}\label{corr_tppp2}}
   \subfigure[Case 5]{\includegraphics[width=5 cm,height=5cm]{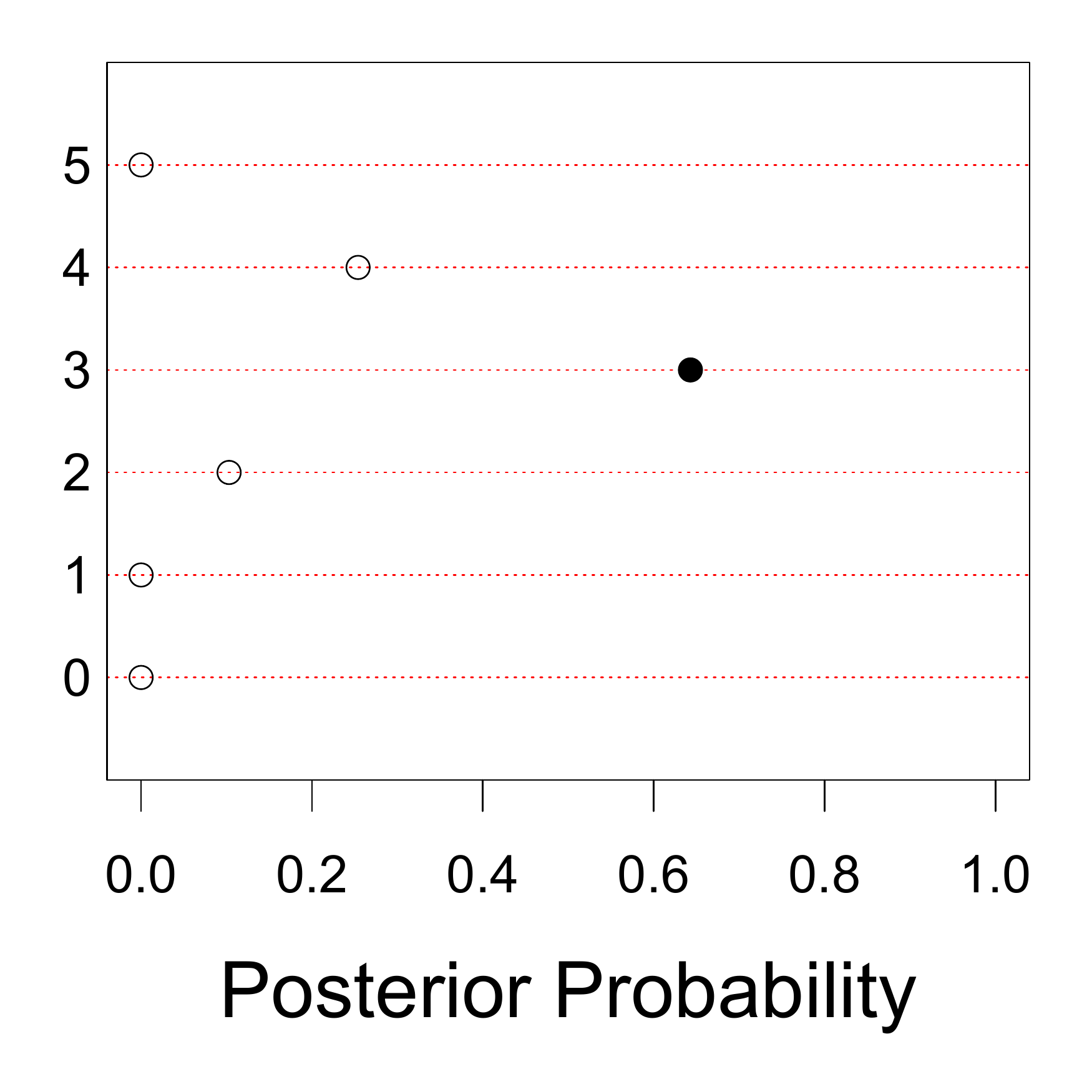}\label{corr_mpp1}}
   \subfigure[Case 6]{\includegraphics[width=5 cm,height=5cm]{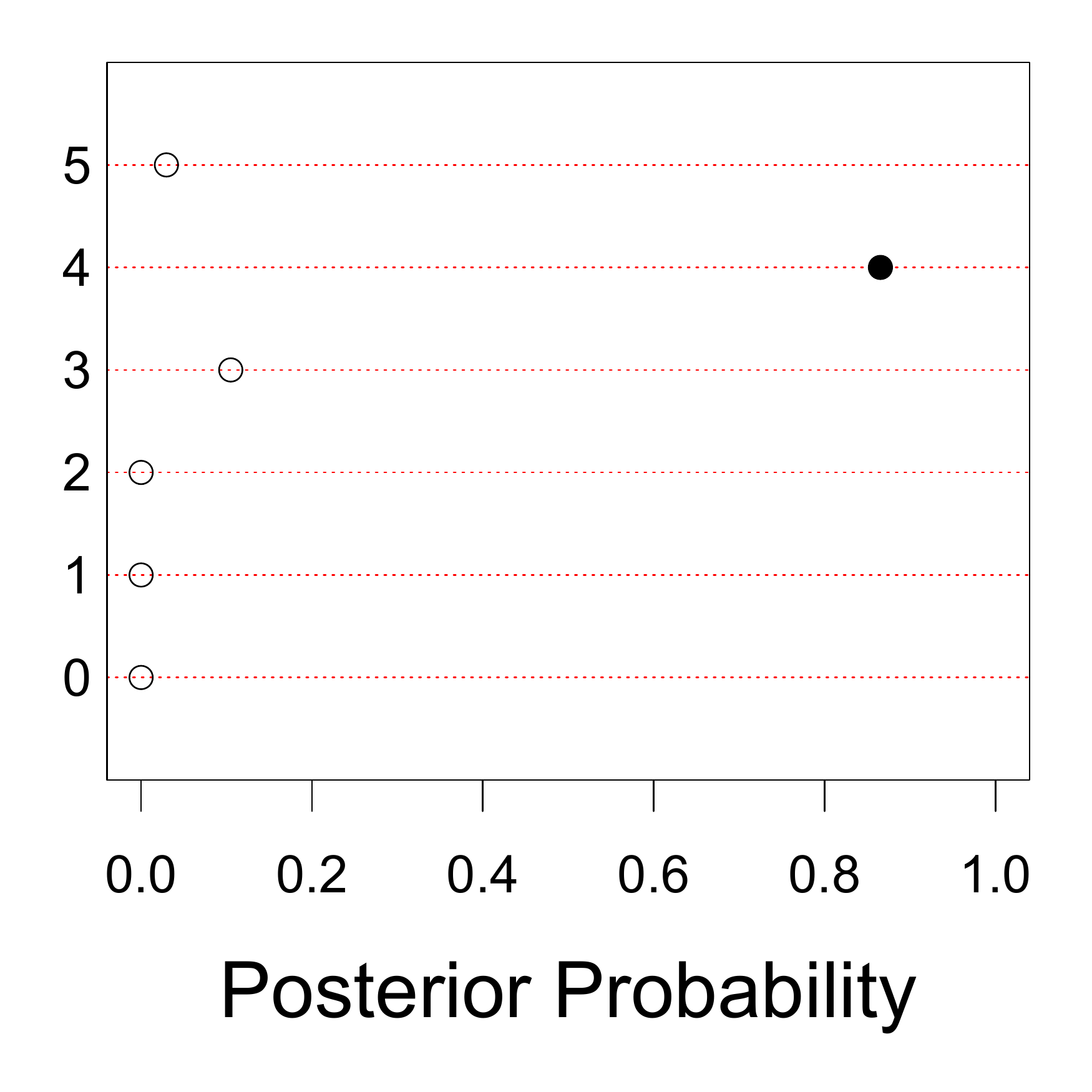}\label{corr_mpp2}}\\
   \subfigure[Case 7]{\includegraphics[width=5 cm,height=5cm]{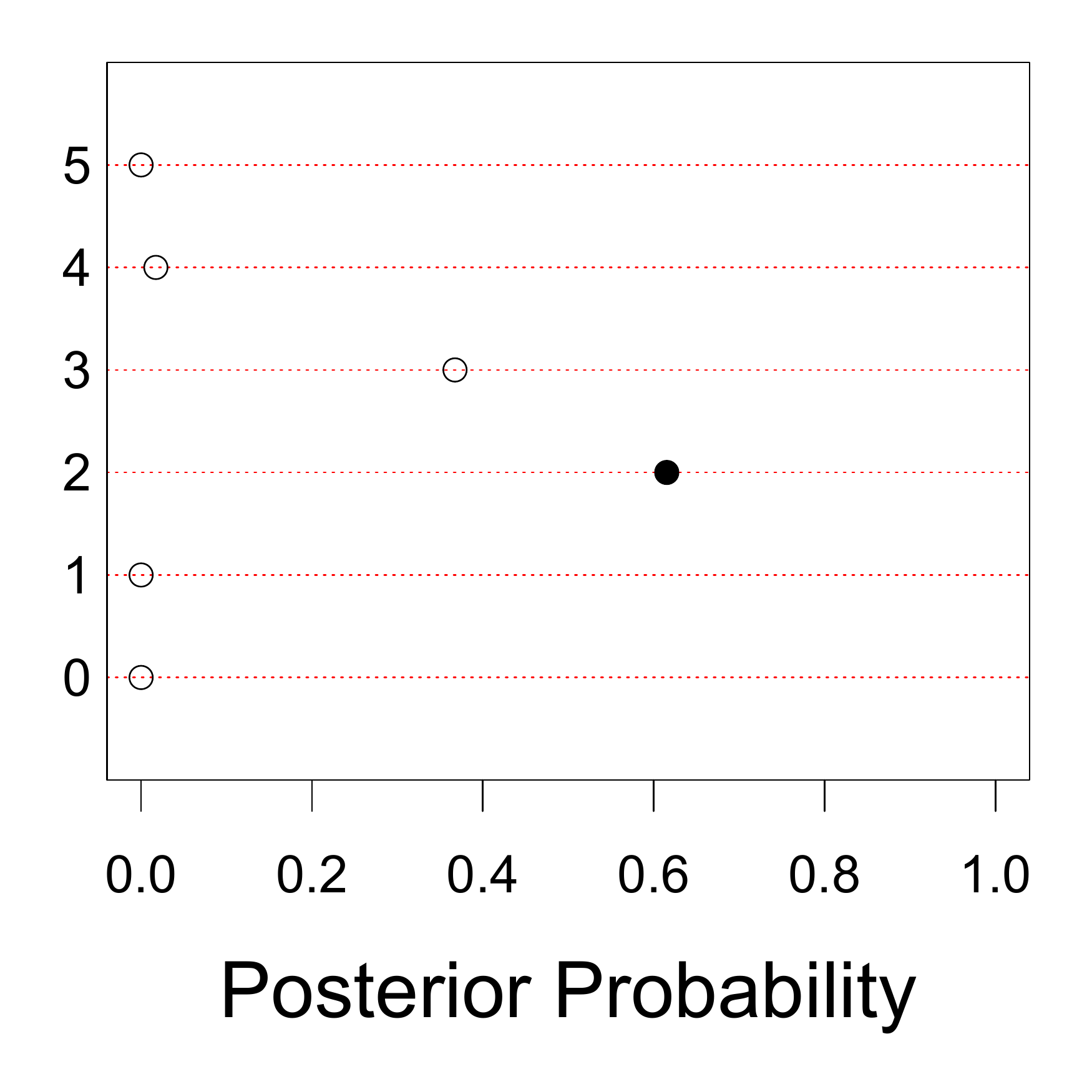}\label{corr2}}
   \subfigure[Case 8]{\includegraphics[width=5 cm,height=5cm]{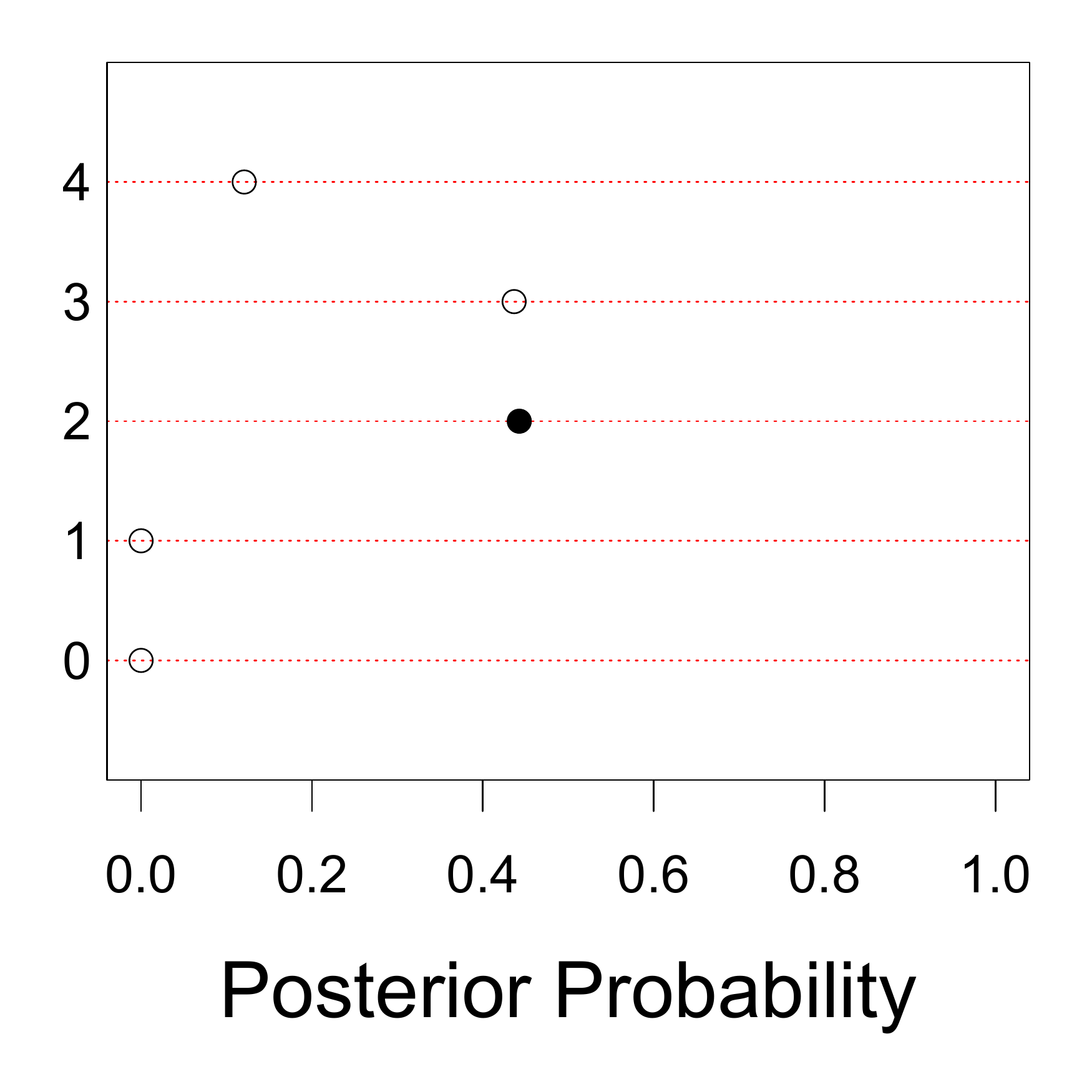}\label{corr_mpp3}}
   \subfigure[Case 9]{\includegraphics[width=5 cm,height=5cm]{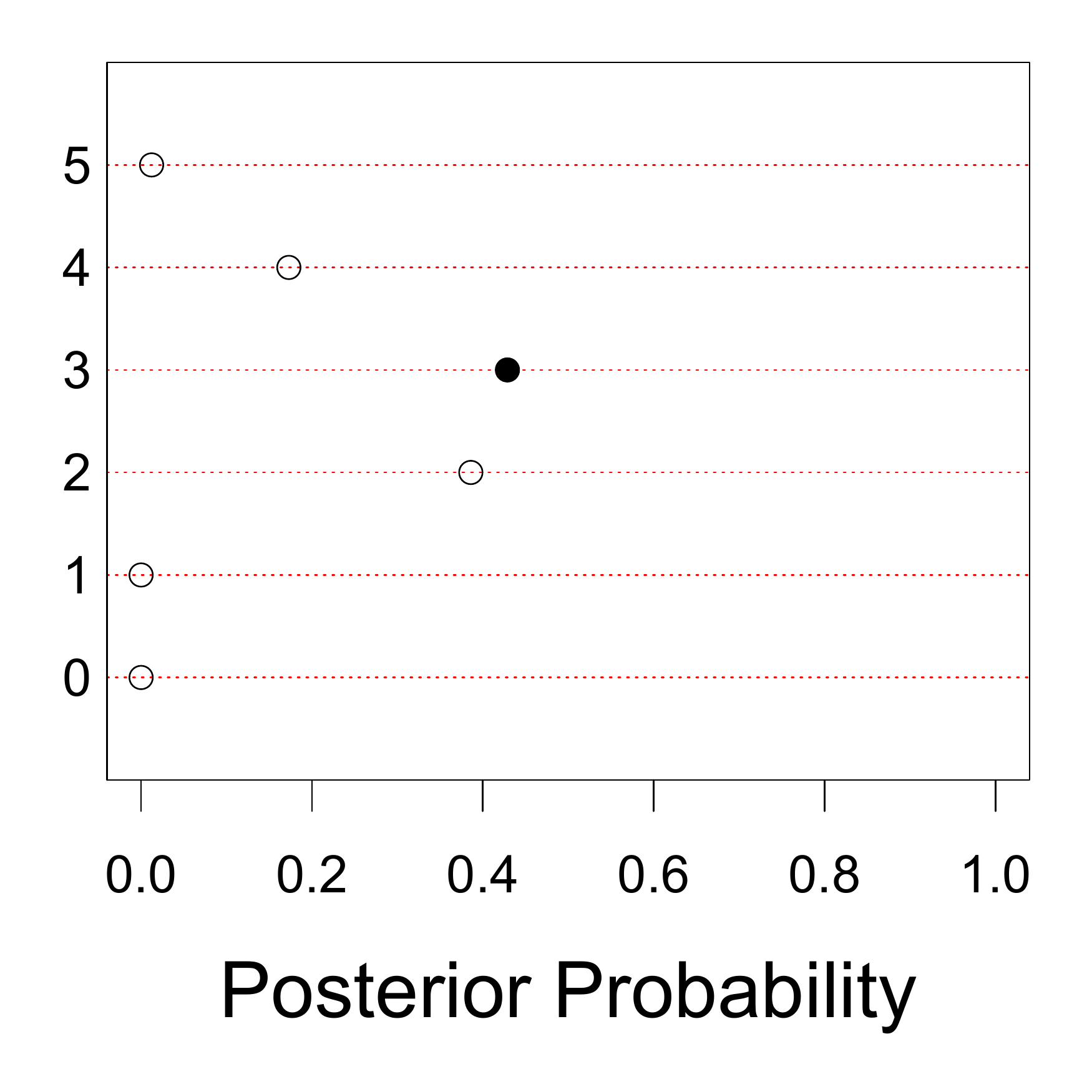}\label{corr3}}
 \end{center}
 \caption{Plots showing posterior probability distribution of effective dimensionality in all $9$ cases in \emph{Simulation 1}. Filled bullets indicate the true value of effective dimensionality.}\label{effective}
\end{figure}

\begin{figure}
  \begin{center}
    \subfigure[Case 1]{\includegraphics[width=5 cm,height=5cm]{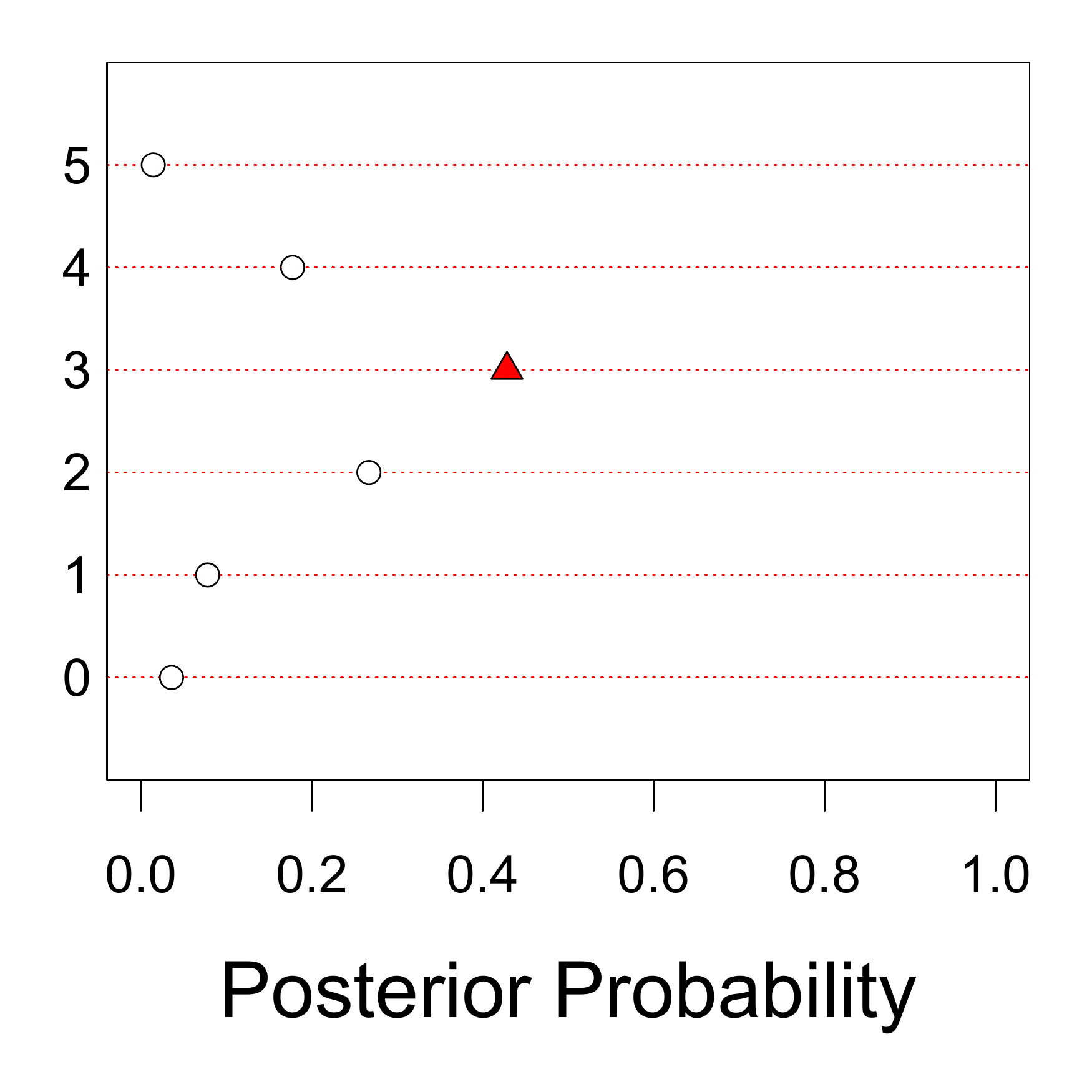}\label{sim2case1dot}}
   \subfigure[Case 2]{\includegraphics[width=5 cm,height=5cm]{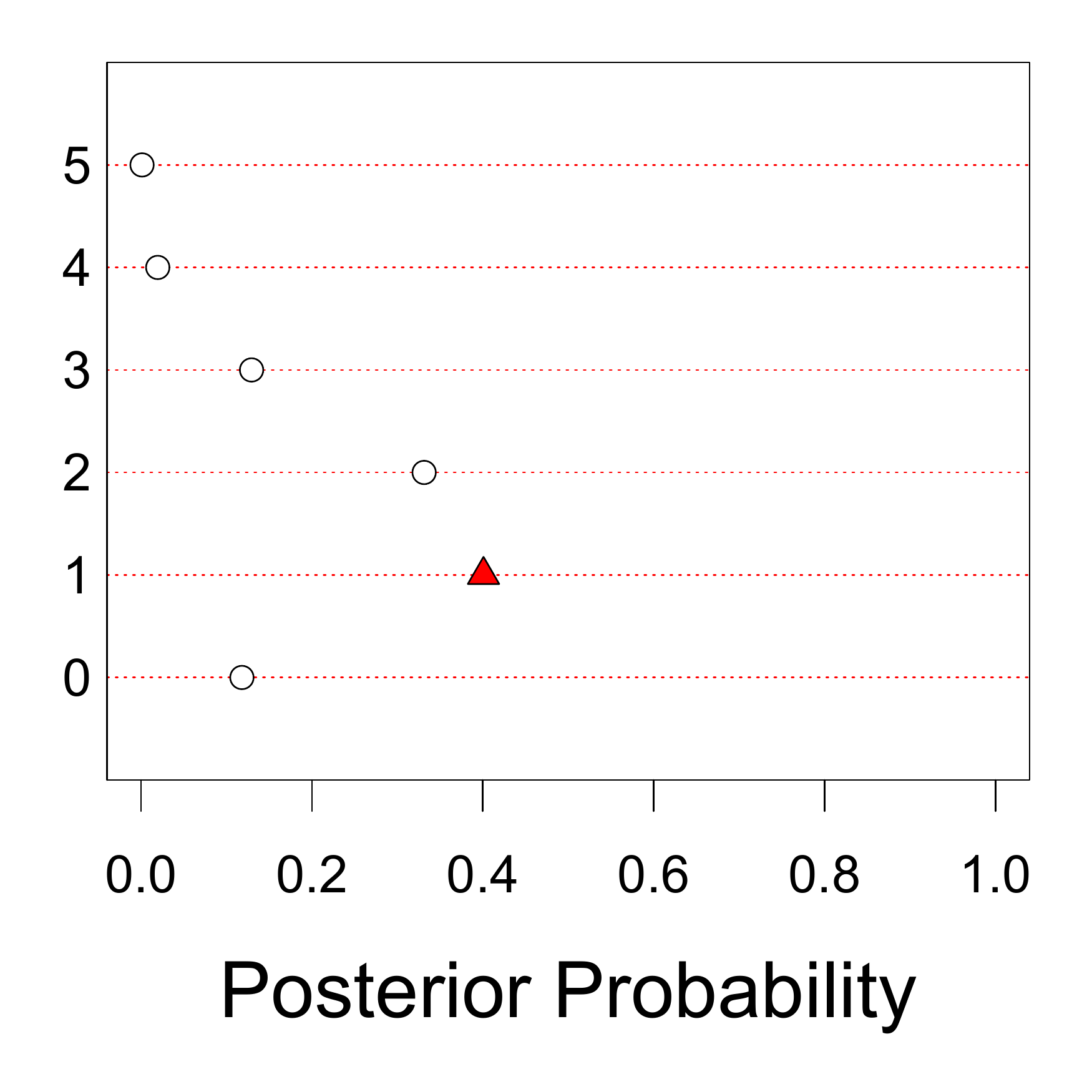}\label{sim2case1dot}}
 \end{center}
 \caption{Plots showing posterior probability distribution of effective dimensionality in the $2$ cases in \emph{Simulation 2}.}\label{effective2}
\end{figure}

\begin{figure}
  \begin{center}
    \subfigure[Case 1]{\includegraphics[width=5 cm,height=5cm]{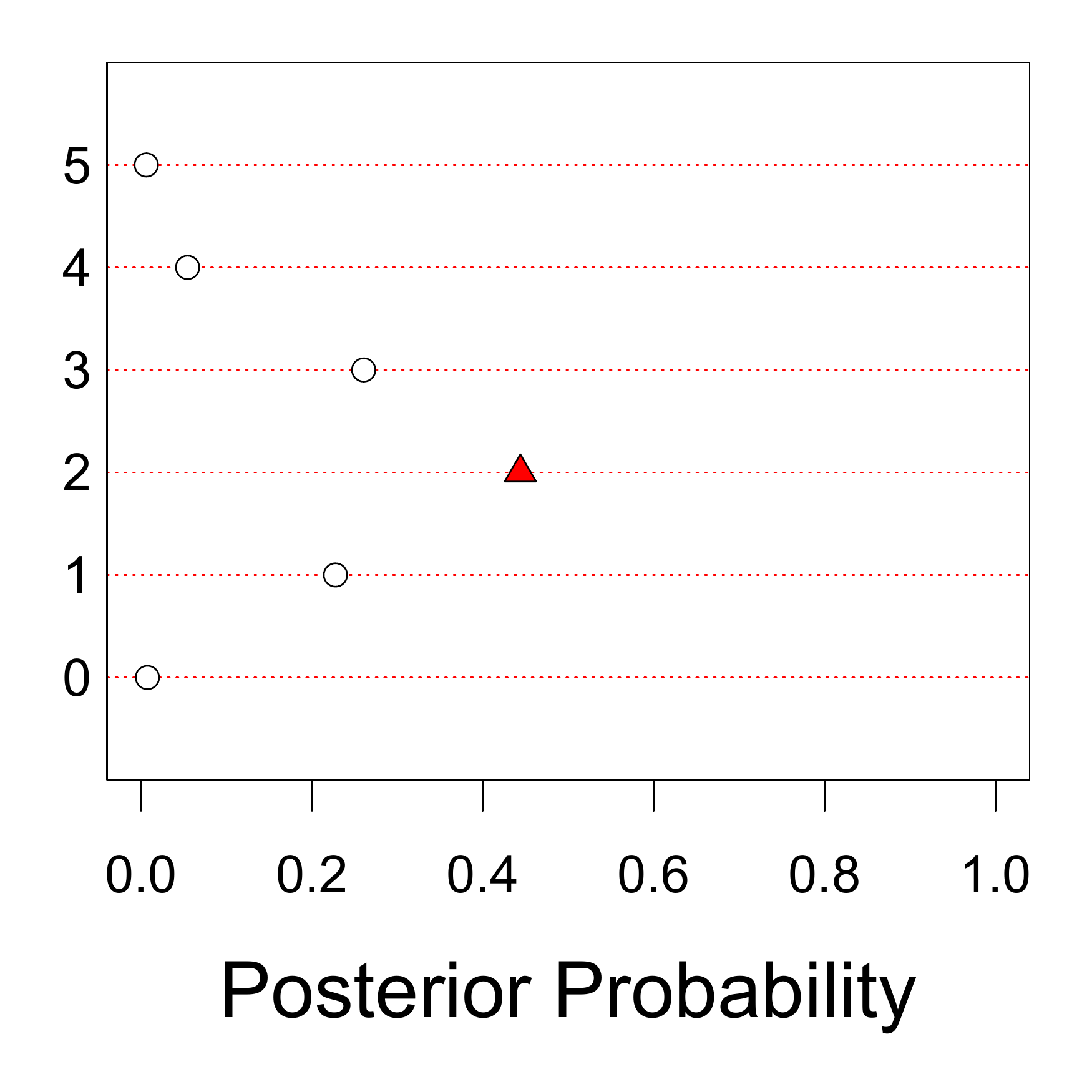}\label{sim3case1dot}}
   \subfigure[Case 2]{\includegraphics[width=5 cm,height=5cm]{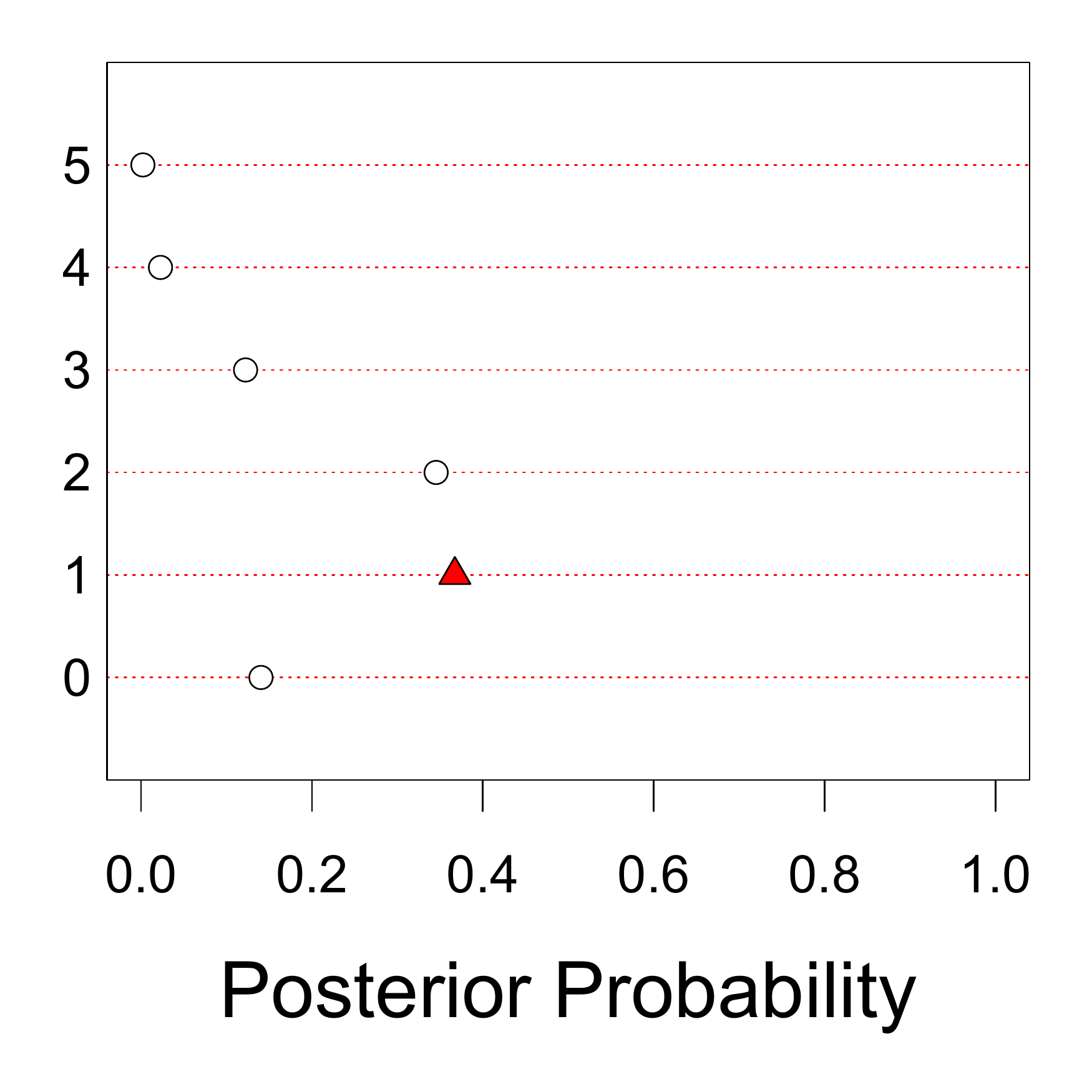}\label{sim3case1dot}}
 \end{center}
 \caption{Plots showing posterior probability distribution of effective dimensionality in the $2$ cases in \emph{Simulation 3}.}\label{effective3}
\end{figure}


\begin{table}[!th]
\begin{center}

\begin{tabular}
[c]{cccc|ccccc}
\hline
\multicolumn{4}{c}{} & \multicolumn{5}{|c}{MSE}\\
\hline
Cases & $R_{gen}$ & $R$ & Sparsity & \textbf{BNR} & Lasso &  Reli\'{o}n(2017) & BLasso & Horseshoe \\
\hline
Case - 1 & 2 & 2 & 0.5 & \textbf{0.009} & 0.438 & 0.524 & 0.472 &  0.395\\
Case - 2 & 2 & 3 & 0.6  & \textbf{0.007} & 0.660 & 0.929 &  0.863 & 0.012\\
Case - 3 & 2 & 5 & 0.3  & \textbf{0.006} & 1.295 & 1.117 &  1.060 & 1.070\\
Case - 4 & 2 & 5 & 0.4 &  \textbf{0.006} & 0.371 & 0.493 &  0.699 & 0.298\\
Case - 5 & 3 & 5  & 0.5  & \textbf{0.009} & 1.344 & 1.629 & 1.638 & 1.381\\
Case - 6 & 4 & 5 & 0.4 & \textbf{0.006} & 3.054 & 2.601 & 2.680 & 3.284\\
Case - 7 & 2 & 4 & 0.5  & \textbf{0.009} & 0.438 & 0.524 & 0.472 &  0.395\\
Case - 8 & 2 & 4 & 0.7 & \textbf{0.005} & 0.015 & 0.251 & 0.007 & 0.008\\
Case - 9 & 3 & 5 & 0.7  & \textbf{0.004} & 0.029 & 0.071 & 0.019 &  0.007\\
\hline
\end{tabular}
\caption{Performance of Bayesian Network Regression (BNR) vis-a-vis competitors for cases in \emph{Simulation 1}. Parametric inference in terms of point estimation of edge coefficients has been captured through the Mean Squared Error (MSE). The minimum MSE among competitors for any case is made bold.}\label{Tab2}
\end{center}
\end{table}

\begin{table}[!th]
\begin{center}
\begin{tabular}
[c]{cccc|ccccc}
\hline
\multicolumn{4}{c}{} & \multicolumn{5}{|c}{MSE}\\
\hline
Cases & $R_{gen}$ & $R$ & Sparsity & \textbf{BNR} & Lasso &  Reli\'{o}n(2017) & BLasso & Horseshoe \\
\hline
Case - 1 & 3 & 5 & 0.7 & 0.011 & 0.013 & 0.036 & 0.010 &  \textbf{0.008}\\
Case - 2 & 3 & 5 & 0.2  & \textbf{0.629} & 0.843 & 0.859 &  0.836 & 0.948\\
\hline
\end{tabular}
\caption{Performance of Bayesian Network Regression (BNR) vis-a-vis competitors for cases in \emph{Simulation 2}. Parametric inference in terms of point estimation of edge coefficients has been captured through the Mean Squared Error (MSE). The minimum MSE among competitors for any case is made bold.}\label{Tab2new}
\end{center}
\end{table}

\begin{table}[!th]
\begin{center}

\begin{tabular}
[c]{ccccc|ccccc}
\hline
\multicolumn{5}{c}{} & \multicolumn{5}{|c}{MSE}\\
\hline
Cases & $R_{gen}$ & $R$ & Node & Edge & \textbf{BNR} & Lasso &  Reli\'{o}n(2017) & BLasso & Horseshoe \\
      &           &     & Sparsity & Sparsity &      &       &                   &        & \\
\hline
Case - 1 & 3 & 5 & 0.7 & 0.5 & 0.004 & 0.006 & 0.017 & 0.004 &  \textbf{0.003}\\
Case - 2 & 3 & 5 & 0.2 & 0.5  & \textbf{0.457} & 0.636 & 0.617 &  0.659 & 0.629\\
\hline
\end{tabular}
\caption{Performance of Bayesian Network Regression (BNR) vis-a-vis competitors for cases in \emph{Simulation 3}. Parametric inference in terms of point estimation of edge coefficients has been captured through the Mean Squared Error (MSE). The minimum MSE among competitors for any case is made bold.}\label{Tab2newnew}
\end{center}
\end{table}

\subsection{Predictive Inference}
For the purpose of assessing predictive inference of the competitors, $n_{pred}=30$ samples are generated from (\ref{data_gen}). We compare the predictive ability of competitors based on the point prediction and characterization of predictive uncertainties. To assess point prediction, we employ the mean squared prediction error (MSPE) which is obtained as the average squared distance between the point prediction and the true responses for all the competitors. As measures of predictive uncertainty, we provide coverage and length of $95\%$ predictive intervals. For frequentist competitors, $95\%$ predictive intervals are obtained by using predictive point estimates plus and minus $1.96$ times standard errors.
Figure~\ref{Fig_pred} provides all three measures for all competitors in the $9$ cases for \emph{Simulation 1}.

It is quite evident from Figure~\ref{MSPE} that BNR outperforms other competitors in terms of point prediction. Among the competitors, Horseshoe does a reasonably good job in cases with a higher degree of sparsity. Note that the data generation procedure in \emph{Simulation 1} ensures that the elements in the edge coefficient matrix $\bB_0$ take reasonably large positive and negative values. Additionally, an overwhelming number of edge coefficients is set to zero. Clearly, with a small sample size, all local and global parameters for the ordinary vector shrinkage priors are shrunk to zero to provide a good estimation of zero coefficients. While doing so, they miss out on estimating coefficients which significantly deviate from zero. Thus, the ordinary vector shrinkage priors do a relatively poor job in terms of point prediction. Lasso also fails to provide satisfactory performance for similar reasons. On the other hand, all models provide close to nominal coverage. However, the predictive intervals associated with BNR are much narrower than those of other competitors. 
\begin{figure}
  \begin{center}
    \subfigure[MSPE]{\includegraphics[width=5.0 cm]{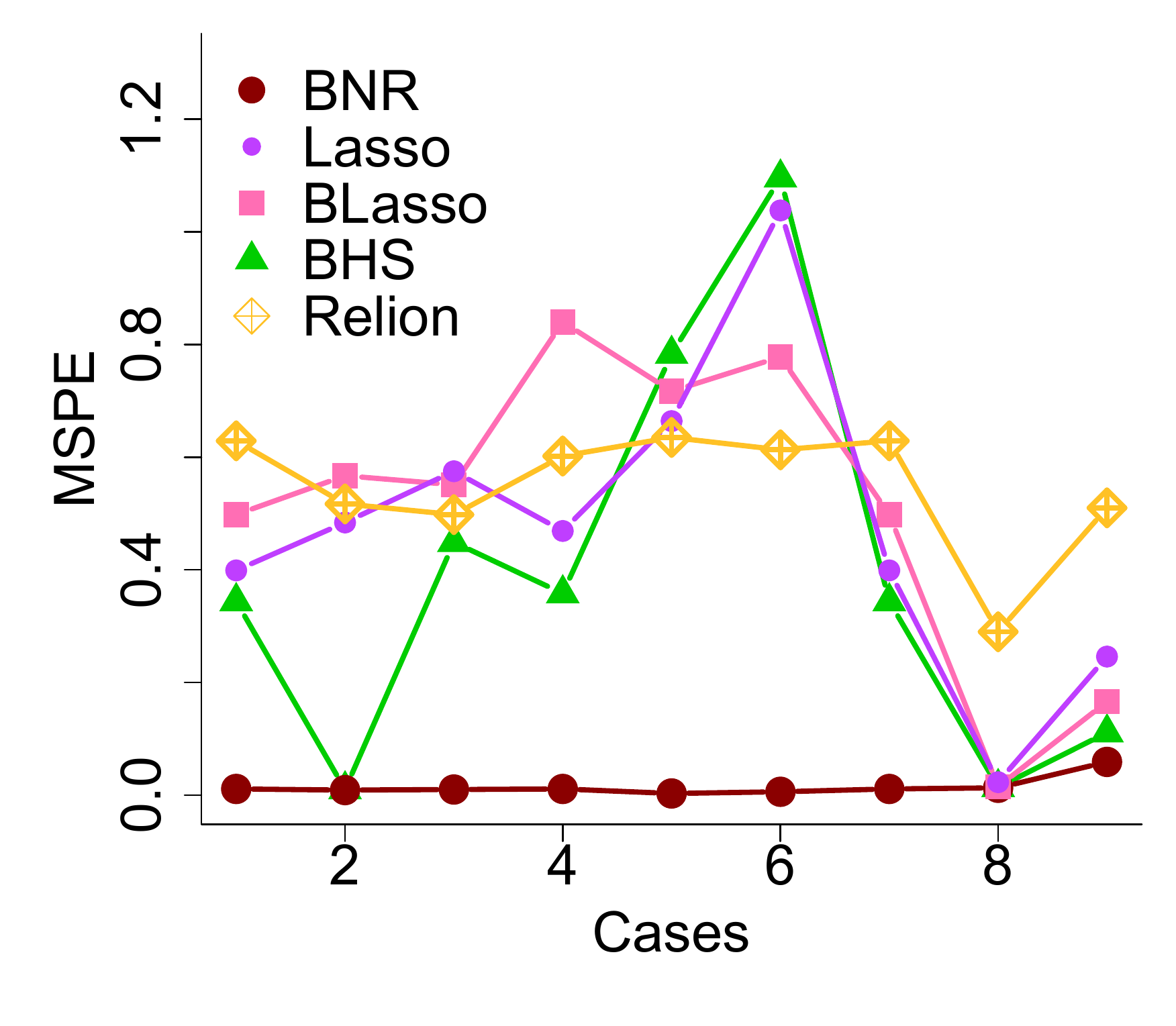}\label{MSPE}}
    \subfigure[Coverage of 95\% PI]{\includegraphics[width=5.0 cm]{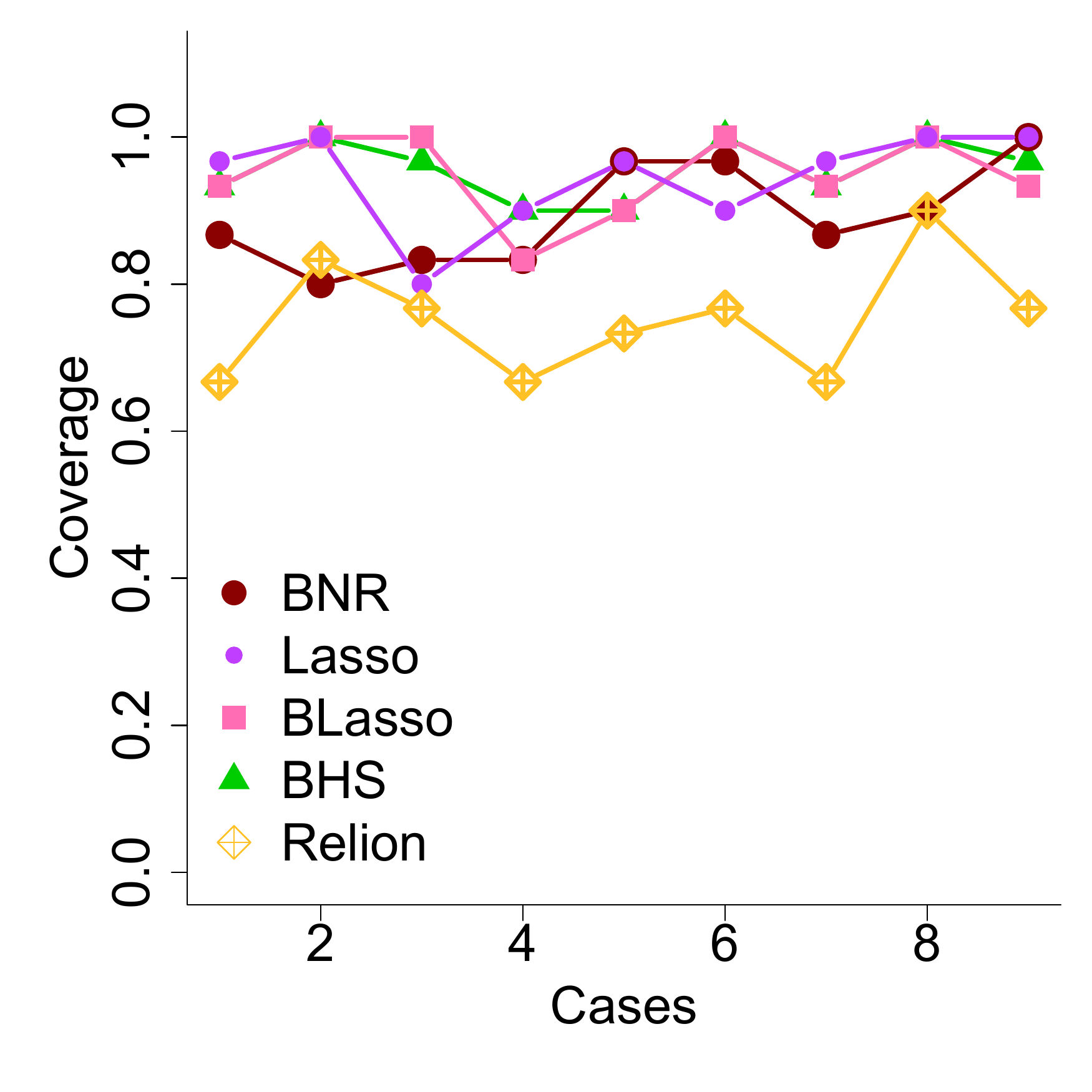}\label{Coverage}}
    \subfigure[Length of 95\% PI]{\includegraphics[width=5.0 cm]{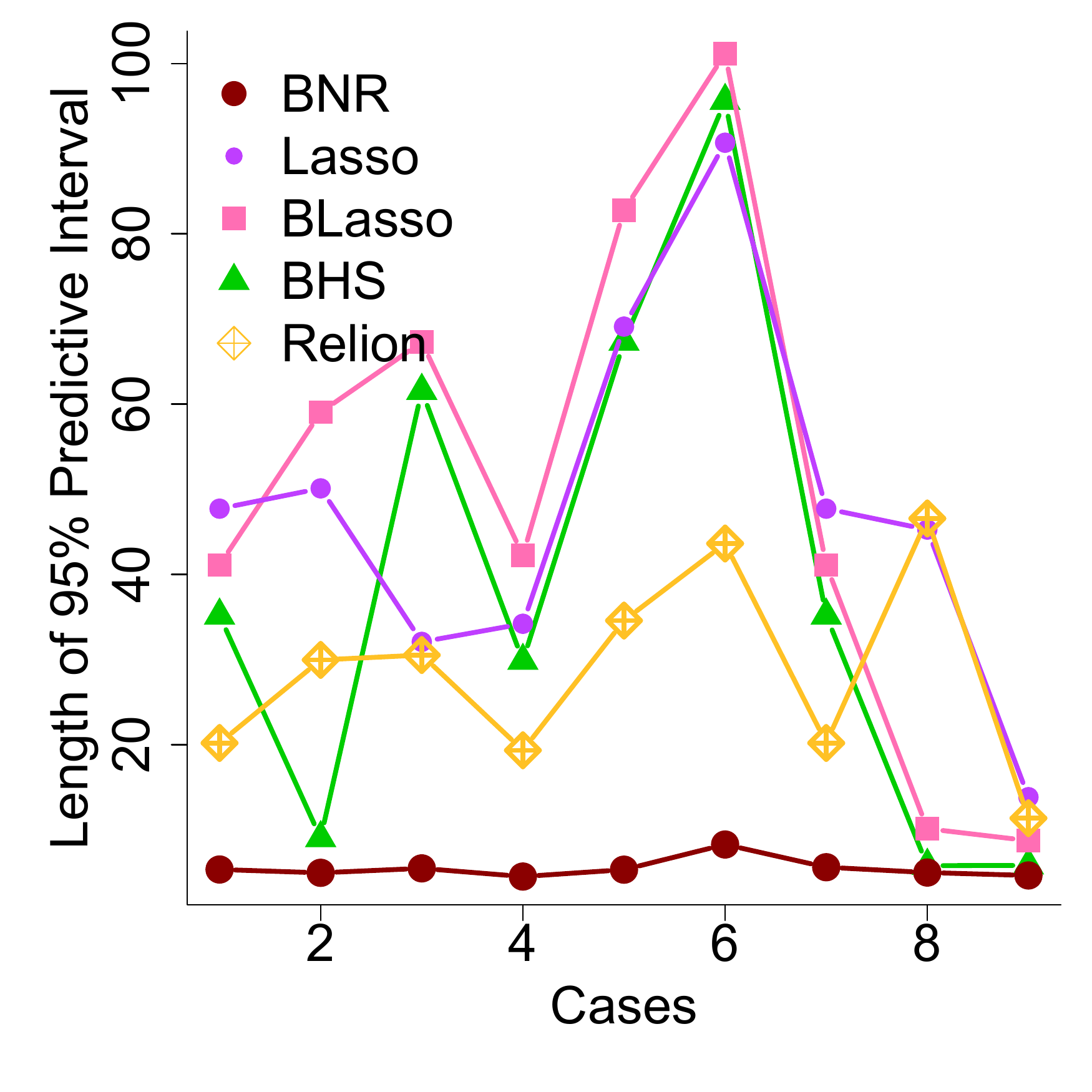}\label{Length}}
 \end{center}
 \caption{Figures from left to right show MSPE, coverage and length of 95\% predictive intervals for all competitors.}\label{Fig_pred}
\end{figure}

The predictive performance for \emph{Simulations 2} and \emph{3} is given in Tables~\ref{Tab3pred} and \ref{Tab3prednew}
respectively. Since the data generation schemes in \emph{Simulations 2} and \emph{3} do not incorporate network structure, the performance of BNR is comparable to that of its competitors in the high sparsity case. On the other hand, in presence of low node sparsity, the performance of Lasso, BLasso and Horseshoe deteriorate, and BNR turns out to be the best performer. 
\begin{table}[!th]
\begin{center}
\begin{tabular}
[c]{cccc|ccccc}
\hline
\multicolumn{4}{c}{} & \multicolumn{5}{|c}{MSPE}\\
\hline
Cases & $R_{gen}$ & $R$ & Sparsity & \textbf{BNR} & Lasso &  Reli\'{o}n(2017) & BLasso & Horseshoe \\
\hline
Case - 1 & 3 & 5 & 0.7 & 0.079 & 0.100 & 0.371 & 0.076 &  \textbf{0.061}\\
Case - 2 & 3 & 5 & 0.2  & \textbf{0.432} & 0.726 & 0.859 &  0.629 & 0.725\\
\hline
\multicolumn{4}{c}{} & \multicolumn{5}{|c}{Coverage of 95\% PI}\\
\hline
Case - 1 & 3 & 5 & 0.7 & 1.00 & 1.00 & 0.867 & 1.00 &  0.967\\
Case - 2 & 3 & 5 & 0.2  & 0.94 & 0.73 & 0.56 &  0.96 & 0.87\\
\hline
\multicolumn{4}{c}{} & \multicolumn{5}{|c}{Length of 95\% PI}\\
\hline
Case - 1 & 3 & 5 & 0.7 & 8.97 & 18.70 & 10.23 & 8.25 &  6.40\\
Case - 2 & 3 & 5 & 0.2  & 42.18 & 32.81 & 23.54 &  45.69 & 42.91\\
\hline
\end{tabular}
\caption{MSPE, coverage and length of 95\% predictive intervals (PIs) of Bayesian Network Regression (BNR) vis-a-vis competitors for cases in \emph{Simulation 2}. Lowest MSPE for any case is made bold.}\label{Tab3pred}
\end{center}
\end{table}

\begin{table}[!th]
\begin{center}
\begin{tabular}
[c]{ccccc|ccccc}
\hline
\multicolumn{5}{c}{} & \multicolumn{5}{|c}{MSPE}\\
\hline
Cases & $R_{gen}$ & $R$ & Node & Edge & \textbf{BNR} & Lasso &  Reli\'{o}n(2017) & BLasso & Horseshoe \\
      &           &     & Sparsity & Sparsity &      &       &                   &        & \\
\hline
Case - 1 & 3 & 5 & 0.7 & 0.5 & 0.119 & 0.151 & 0.425 & 0.125 &  \textbf{0.096}\\
Case - 2 & 3 & 5 & 0.2 & 0.5  & \textbf{0.451} & 0.549 & 0.699 &  0.692 & 0.566\\
\hline
\multicolumn{5}{c}{} & \multicolumn{5}{|c}{Coverage of 95\% PI}\\
\hline
Case - 1 & 3 & 5 & 0.7 & 0.5 & 0.93 & 1.00 & 0.86 & 0.96 &  0.96\\
Case - 2 & 3 & 5 & 0.2 & 0.5  & 1.00 & 0.83 & 0.70 & 1.00 &  1.00\\
\hline
\multicolumn{5}{c}{} & \multicolumn{5}{|c}{Length of 95\% PI}\\
\hline
Case - 1 & 3 & 5 & 0.7 & 0.5 & 6.18 & 14.19 & 8.35 & 6.44 &  5.91\\
Case - 2 & 3 & 5 & 0.2 & 0.5  & 41.69 & 27.98 & 17.84 &  51.70 & 49.12\\
\hline
\end{tabular}
\caption{MSPE, coverage and length of 95\% predictive intervals (PIs) of Bayesian Network Regression (BNR) vis-a-vis competitors for cases in \emph{Simulation 3}. Lowest MSPE for any case is made bold.}\label{Tab3prednew}
\end{center}
\end{table}

\section{Application to Human Brain Network Data}\label{sec5}
This section illustrates the inferential and predictive ability of Bayesian network regression in the context of a diffusion weighted magnetic resonance imaging (DWI) dataset. Along with the brain network data, the dataset of interest contains a measure of \emph{creativity} for several subjects, known as the Composite Creativity Index (CCI). The scientific goal in this setting pertains to understanding the relationship between brain connectivity and the composite creativity index (CCI). 
In particular, we are interested in predicting the CCI of a subject from his/her brain network, and to identify brain regions (nodes in the brain network) that are involved with creativity, as well as significant connections between different brain regions.

Human creativity has been at the crux of the evolution of the human civilization, and has been the topic of research in several disciplines including neuroscience. Though creativity can be defined in
numerous ways, one could envision a creative idea as one that is unusual as well as effective in a given social context (\citealp{flaherty2005frontotemporal}). Neuroscientists generally concur that a coalescence of several cognitive processes determines the creative process, which often involves a \emph{divergence of ideas} to conceivable solutions for a given problem. To measure the creativity of an individual, \citealp{jung2010neuroanatomy} propose the CCI, which is formulated
by linking measures of divergent thinking and creative achievement to cortical thickness of young (23.7 $\pm$ 4.2 years), healthy subjects. Three independent judges grade the creative products of a subject from which the ``composite creativity index" (CCI) is derived. CCI serves as the response in our study.

Along with CCI measurements, brain network information for $n = 37$ subjects is gathered using diffusion weighted magnetic resonance imaging (DWI). DWI  is an imaging technique that enables measurement of the restricted diffusion of water in tissue in order to produce neural tract images. The brain imaging data we use has been pre-processed using the \textsf{NDMG} pre-processing pipeline (\citealp{kiar2016ndmg}; \citealp{kiar2017example}; \citealp{kiar2017science}).
In the context of DWI, the human brain is divided according to the Desikan atlas (\citealp{desikan2006automated}) that identifies 34 cortical regions of interest (ROIs) both in the left and right hemispheres of the human brain, implying 68 cortical ROIs in all. 
A `brain network' for each subject is represented by a symmetric adjacency matrix whose rows and columns correspond to different ROIs and entries correspond to estimates of the number of `fibers' connecting pairs of brain regions. 
Thus, for each individual, representing the brain network, is a weighted adjacency matrix of dimension $68 \times 68$, with the $(k,l)$th off-diagonal entry in the adjacency matrix being the estimated number of fibers connecting the $k$th and the $l$th brain regions. Figure~\ref{DataPic} shows maps of the brain network for two representative individuals in the sample. Each cell of the adjacency matrix is standardized by subtracting the mean and dividing by the standard deviation with respect to all $n=37$ samples.  CCI is also standardized in a similar fashion. Now we fit our proposed model with standardized CCI as the response and the standardized adjacency matrix as the network predictor. We use identical prior distributions for all the parameters as in the simulation studies.

The MCMC chain is run for $50,000$ iterations, with the first $30,000$ iterations discarded as burn-in. Convergence is assessed by comparing different simulated sequences of representative parameters started at different initial values (\citealp{gelman2014bayesian}).
All inference is based on the remaining $20,000$ post burn-in iterates appropriately thinned. Moreover, we monitor the auto-correlation plots and effective sample sizes of the iterates.

\begin{figure}
  \begin{center}
    \subfigure[Representative Network Adjacency Matrix 1]{\includegraphics[page=2,width=8 cm,height=7cm]{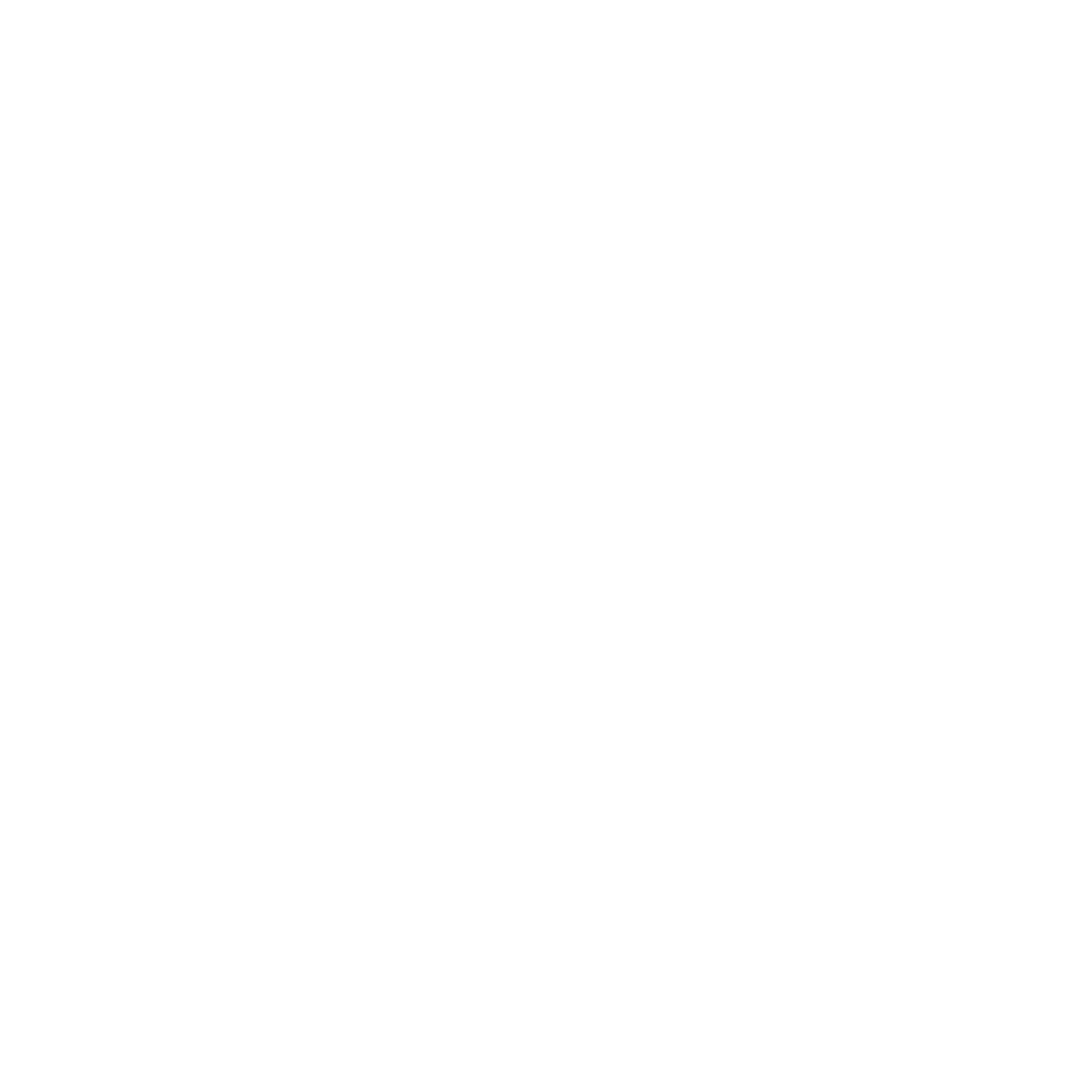}\label{Adj1}}
   \subfigure[Representative Network Adjacency Matrix 2]{\includegraphics[page=2,width=8 cm,height=7cm]{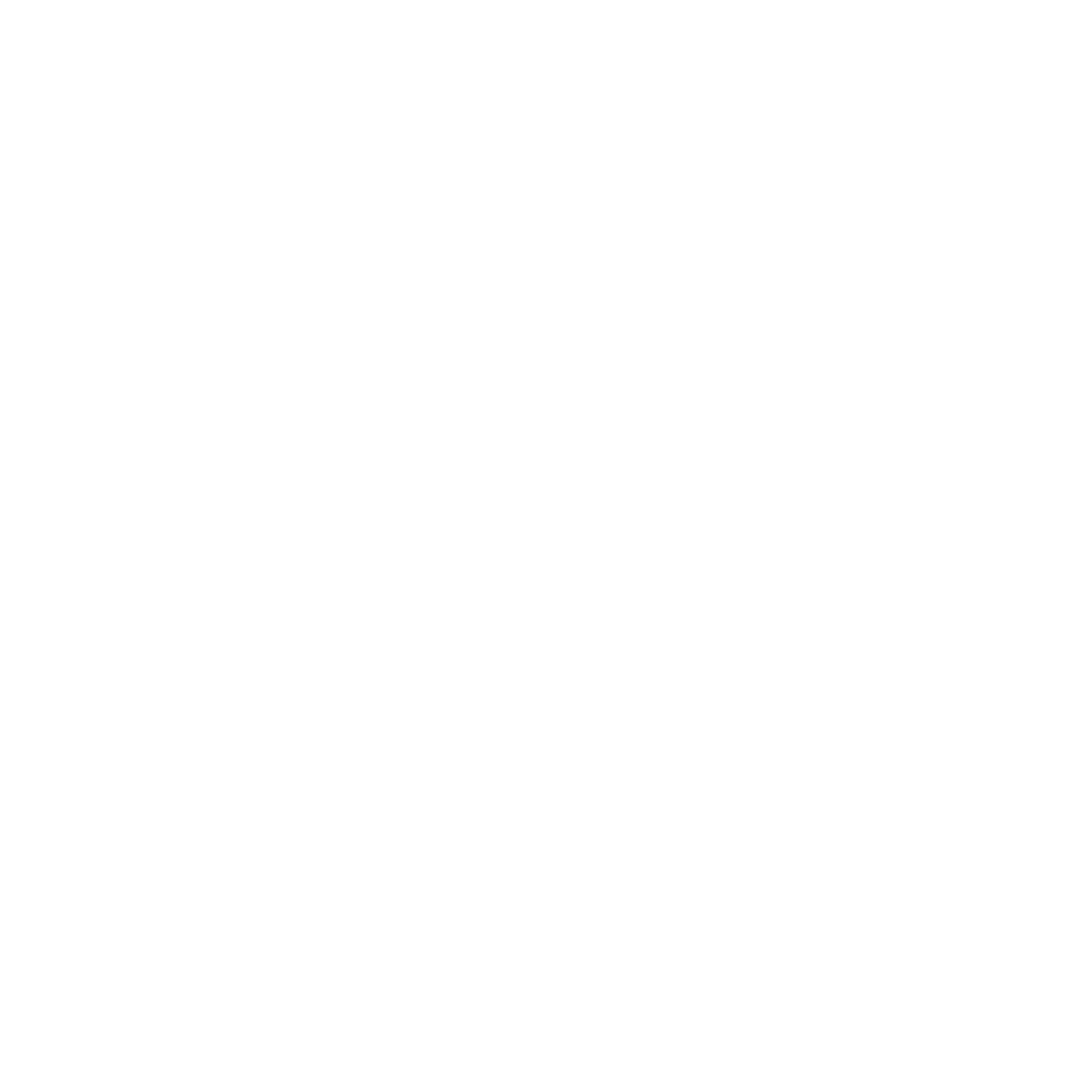}\label{Adj2}}
 \end{center}
 \caption{Figure shows maps of the brain network (weighted adjacency matrices) for two representative individuals in the sample. Since the $(k,l)$th off-diagonal entry in any adjacency matrix corresponds to the number of \emph{fibers} connecting the $k$th and the $l$th ROIs, the adjacency matrices are symmetric. Hence the figure only shows the upper triangular portion.}\label{DataPic}
\end{figure}

\subsection{Findings from BNR}
We focus on identifying influential ROIs in the brain network using the node selection strategy described in the simulation studies. For the purpose of this data analysis, the Bayesian network regression model is fitted with $R = 5$ which is found to be sufficient for this study. 
Recall that the $k$th node is identified as \emph{active} if $P(\xi_k = 1 \given Data)$ exceeds $0.5$. This criteria, when applied to the real data discussed above, identifies $36$ ROIs out of $68$ as \emph{active}. Of these $36$ ROIs, $16$ belong to the left portion of the brain (or the left hemisphere) and $20$ belong to the right hemisphere. The effective dimension of the model a-posteriori is $3$.
Table~\ref{Tab4} shows the brain ROIs in the Desikan atlas detected as being actively associated with the CCI. 

A large number of the $36$ active nodes detected by our method are part of the \emph{frontal} ($15$) and \emph{temporal} ($8$) cortices in both hemispheres. The frontal cortex has been scientifically associated with divergent thinking and problem solving ability, in addition to motor function, spontaneity, memory, language, initiation, judgement, impulse control and social behavior (\citealp{stuss1985subtle}). Some of the other functions directly related to the frontal cortex seem to be ``behavioral spontaneity,'' interpreting environmental feedback and risk taking (\citealp{razumnikova2007creativity}; \citealp{miller1985cognitive} ; \citealp{kolb1981performance}). On the other hand, \citealp{finkelstein1991impulsive} report \emph{de novo} artistic expression to be associated with the temporal and frontal regions.
Our method also finds a strong relationship between creativity and the \emph{right parahippocampal gyrus} and \emph{right inferior parietal lobule}, regions found to be involved with creativity by a few earlier scientific studies, see e.g., \citealp{chavez2004neurobiology}.

As a reference point to our analysis, we compare our findings with \citealp{jung2010neuroanatomy}, where a regression model is used to understand the relationship between CCI and ROI-specific measures to account for the relationship between creativity and different brain regions.
Our analysis finds a number of overlaps with the regions that \citealp{jung2010neuroanatomy} identify as significantly associated with the creativity process, namely the \emph{middle frontal gyrus}, the \emph{left cingulate cortex}, the \emph{left orbitofrontal} region, the \emph{left lingual} region, the \emph{right fusiform}, the \emph{left cuneus}, the \emph{right superior parietal lobule}, the \emph{inferior parietal}, the \emph{superior parietal} lobules and the \emph{right posterior singulate} regions. Although there is significant intersection between the findings of \citealp{jung2010neuroanatomy} and our method, there are a few regions that we detect as active and they do not, and vice versa. For example, our model detects the \emph{precuneus} and the \emph{supramarginal} regions in both the hemispheres to be significantly related to CCI, while \citealp{jung2010neuroanatomy} do not. On the other hand, they identify the \emph{right angular} region to be significant while we do not.
We also implement the method of \citealp{relion2017network} on our dataset, and find that it identifies $65$ out of $68$ ROIs as active. The regions that are found to inactive are the \emph{frontalpole}, \emph{temporalpole} and the \emph{transversetemporal} regions in the right hemisphere.

Along with influential ROIs, we are interested in identifying the statistically significant edges or connections between the $68$ ROIs. We consider the edge between two ROIs $k$ and $l$ to have a statistically significant impact on CCI if the $95\%$ credible interval of the posterior distribution of its corresponding coefficient $\gamma_{k,l}$ does not contain $0$. Under this measure, our model identifies $576$ significant $\gamma_{k,l}$'s. 
Figure \ref{Fig2} plots significant inter-connections detected among brain regions of interest (ROIs), where the brain can be viewed from different angles. Red dots show the active ROIs and blue lines show significant connections between them.
Figure \ref{Fig3} plots these influential interconnections, where a white cell represents an edge predictive of the response with the corresponding row ROI and column ROI. Since this is an undirected network, the matrix is symmetric and we only show connections in the upper triangular region.

Finally, our interest turns to the predictive ability of the Bayesian network regression model. To this end, Table~\ref{Tab3} reports the mean squared prediction error (MSPE) between observed and predicted responses, length and coverage of 95\% predictive intervals. Here, the average is computed over $10$ cross-validated folds.
As reference, we also present MSPE, length and coverage values for Lasso, BLasso and \citealp{relion2017network}. 
Our approach models correlation across coefficients sparsely, which seems to improve the prediction vis-a-vis Lasso and BLasso.
The table clearly shows excellent point prediction of our proposed approach even under small sample size and low signal to noise ratio.
Additionally, all competitors provide close to nominal coverage, with BNR yielding slightly less than nominal coverage, and the other competitors yielding slightly more, but with much wider credible intervals.

\begin{table}[!th]
\begin{center}
\begin{tabular}
[c]{cccccc}
\hline
 & BNR & Lasso &  BLasso & Reli\'{o}n(2017)\\
\hline
 MSPE & \textbf{0.69} & 0.98 & 1.84 & 0.98\\
 Coverage of 95\% PI & 0.93 & 0.97 & 0.97 & 0.97 \\
 Length of 95\% PI & \textbf{2.72} & 3.88 & 3.40 & 3.89\\
\hline
\end{tabular}
\caption{Predictive performance of competitors in terms of mean squared prediction error (MSPE), coverage and length of 95\% predictive intervals, obtained through $10$-Fold Cross Validation in the context of real data. Note that
since the response has been standardized, an MSPE value greater than or around $1$ will denote an inconsequential analysis. }\label{Tab3}
\end{center}
\end{table}

\begin{figure}[!ht]
   \begin{center}
   \includegraphics[width=11 cm, height = 8 cm]{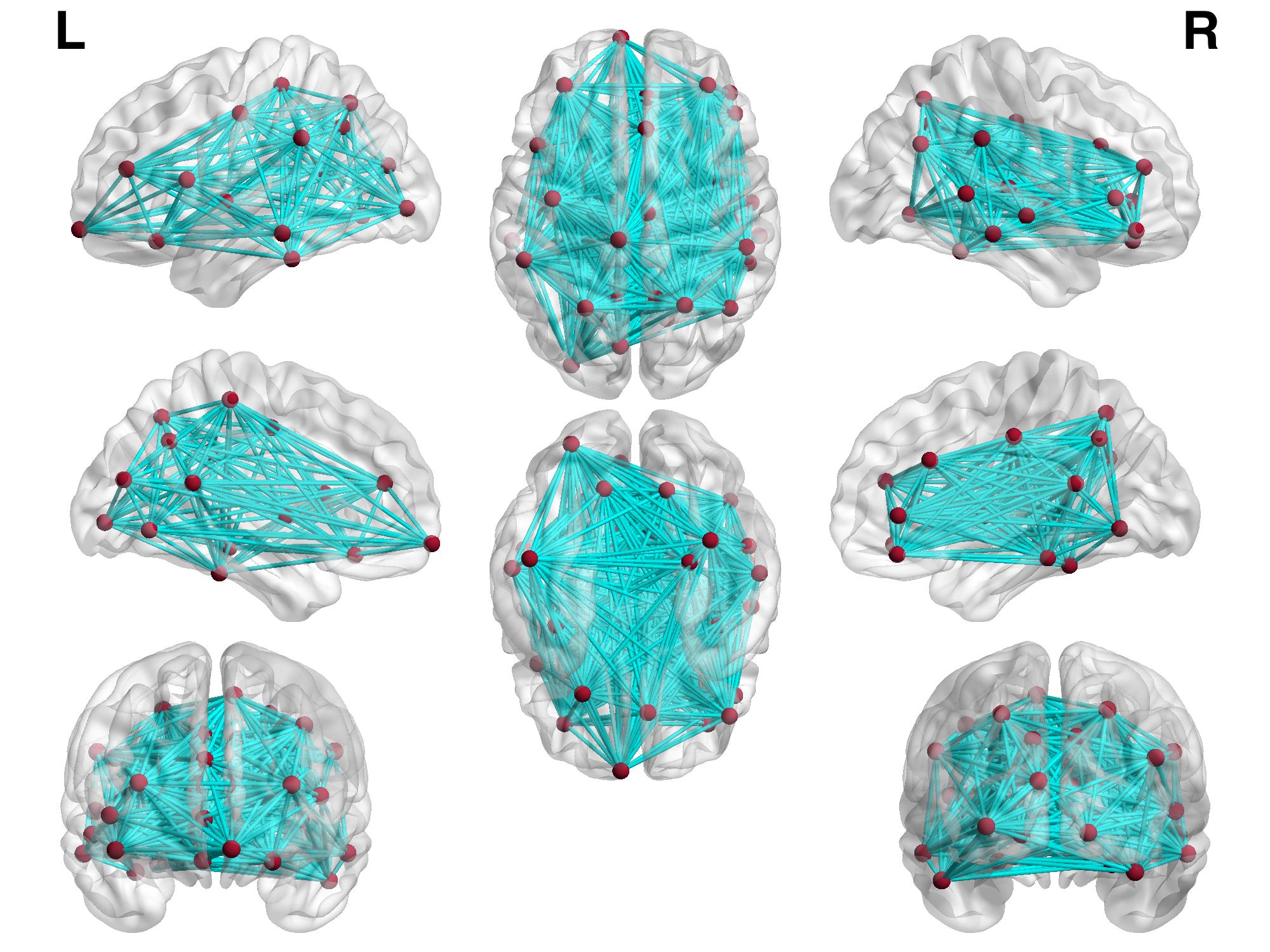}
   \end{center}
\caption{Significant inter-connections detected among brain regions of interest (ROIs) in the Desikan atlas. Red dots show the \emph{active} ROIs and blue lines show significant inter-connections between them.}
\label{Fig2}
\end{figure}

\begin{table}[h]
{\scriptsize
\centering
\begin{tabular}{cccccc}
\hline
\multicolumn{6}{|c|}{\textbf{Left Hemisphere Lobes}}\\
\cline{1-6}
         &                  &             &             &              &  \\
\textbf{Temporal} & \textbf{Cingulate} & \textbf{Frontal} & \textbf{Occipital} & \textbf{Parietal} & \textbf{Insula}\\
\hline
      \makecell{inferior \\temporal gyrus}        &     \makecell{isthmus \\ cingulate cortex}             &   \makecell{ lateral \\ orbitofrontal }        &      cuneus       &       precuneus      & insula  \\
      \hline
     \makecell{middle \\temporal gyrus} & \makecell{} & \makecell{paracentral} & \makecell{lateral \\occipital gyrus} & \makecell{superior \\parietal lobule}  &  \\
      \hline
       \makecell{} & \makecell{} & \makecell{pars \\ opercularis} & lingual & \makecell{supramarginal \\ gyrus} &  \\
        \hline
        & & precentral & & &\\
         \hline
        & & \makecell{rostral middle \\frontal gyrus} & & &\\
         \hline
        & & frontal pole & & &\\
        \hline

       &                  &             &             &              &  \\
        &                  &             &             &              &  \\
  \hline
  \multicolumn{6}{|c|}{\textbf{Right Hemisphere Lobes}}\\
  \cline{1-6}
   &                  &             &             &              &  \\
  \textbf{Temporal} & \textbf{Cingulate} & \textbf{Frontal} & \textbf{Occipital} & \textbf{Parietal} & \textbf{Insula}\\
  \hline
    \makecell{bank of the \\superior temporal sulcus}   &  \makecell{caudal \\ anterior cingulate}     &   \makecell{medial \\orbitofrontal}   &  lingual  &  \makecell{inferior \\parietal lobule}  & insula\\
      \hline
     \makecell{fusiform} & \makecell{isthmus \\cingulate cortex} & \makecell{pars \\orbitalis}  &  &  \makecell{precuneus} &  \\
      \hline
       \makecell{middle\\temporal gyrus} & \makecell{posterior\\cingulate cortex} & \makecell{pars \\ triangularis} &  & \makecell{superior \\parietal lobule} &  \\
        \hline
         parahippocampal &  \makecell{rostral \\anterior cingulate cortex}  &  \makecell{rostral \\middle frontal gyrus} & & \makecell{supramarginal \\gyrus} &\\
         \hline
          \makecell{superior \\ temporal gyrus}  & &  & & &\\
         \hline
         \makecell{transverse \\ temporal}  & & & & &\\
         \hline
        & & & & &\\

\end{tabular}
\caption{Brain regions (ROIs) detected as actively associated with the composite creativity index by BNR. }\label{Tab4}
}
\end{table}

\begin{figure}[!ht]
   \begin{center}
   \includegraphics[width=15.5 cm, height = 15.5 cm]{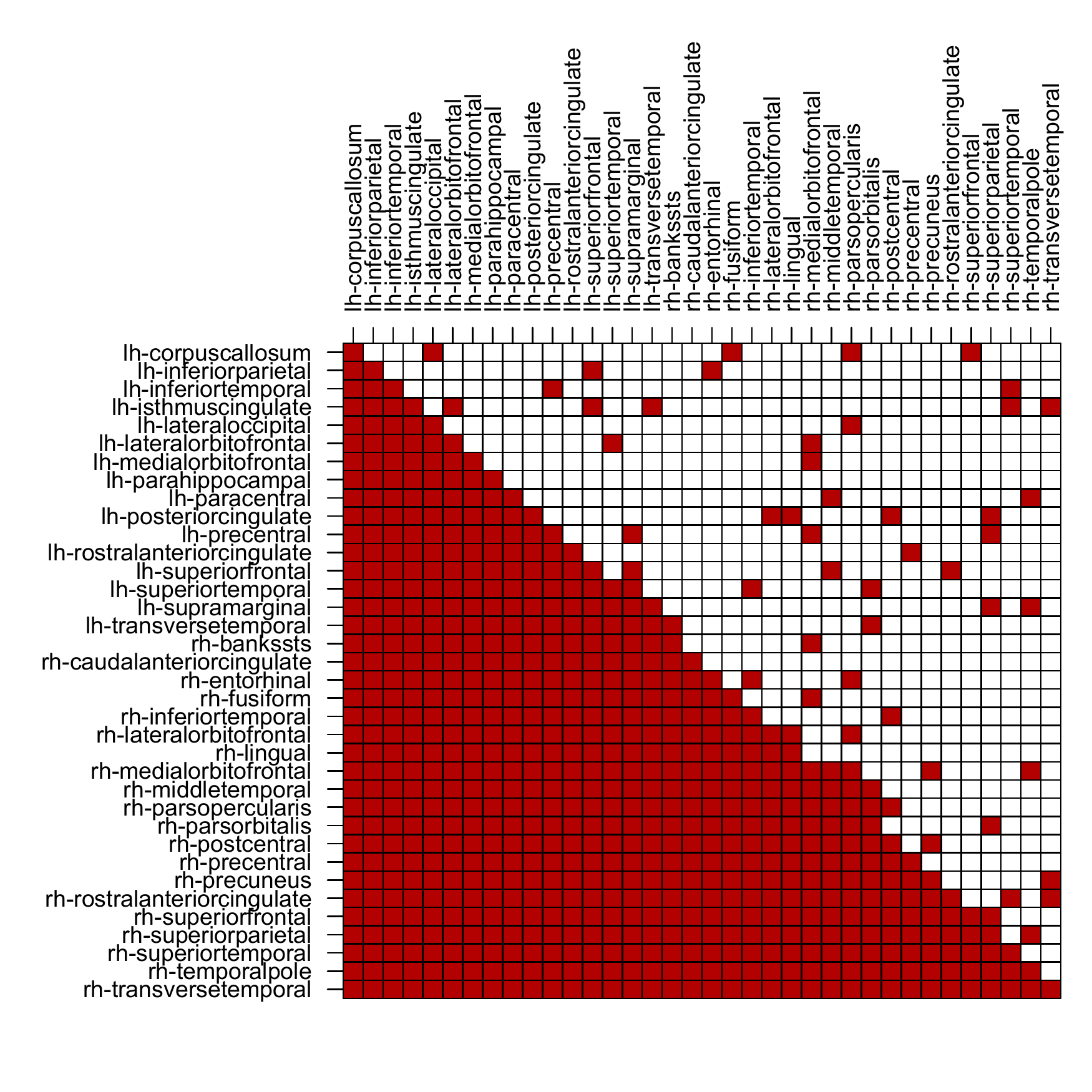}
   \end{center}
\caption{Significant inter-connections detected among \emph{active} brain regions of interest (ROIs) in the Desikan atlas. White cells show significant nodal associations among ROIs denoted by the corresponding rows and columns. Interconnections have been shown only in the upper triangular portion. Prefix `lh-' and `rh-' in the ROI names denote their positions in the left and right hemispheres of the brain respectively. For full names of the ROIs specified on the axes, please consult the widely available Desikan Atlas.}
\label{Fig3}
\end{figure}

\section{Conclusion and Future Work}\label{sec6}
This article proposes a novel Bayesian framework to address a regression problem with a continuous response and network-valued predictors, respecting the underlying network structure.
Our contribution lies in carefully constructing a novel class of network shrinkage priors corresponding to the network predictor which accounts for the correlation in the regression coefficients that is expected from the relational nature of the predictor.  Empirical results from simulation studies show that our method is superior to popular alternatives in situations where the regression coefficients have a network structure, and very competitive in other circumstances, both in terms of inference as well as prediction. Our framework is employed to analyze
a brain connectome dataset on composite creativity index along with the brain network of multiple individuals. It is able to identify important regions in the brain and important brain connectivity patterns which have profound influence on the creativity of a person.

A number of future directions emerge from this work. First, our framework finds natural extension to regression problems with a binary response and \emph{any} network predictor,
whether binary or weighted. Such a framework would be useful in various classification problems involving network predictors, e.g., in classifying diseased patients from normal people in neuroimaging studies. Another important
direction appears to be the development of a regression framework with the network as the response regressed on a few scalar/vector predictors. Some of these constitute our current work.

\section*{Appendix A}\label{appA}
This section provides details of posterior computation for all the parameters in the Bayesian network regression with a continuous response.

Let $\bx_i=(a_{i,1,2},a_{i,1,3},...,a_{i,1,V}, a_{i,2,3}, a_{i,2,4},..., a_{i,2,V}, ...., a_{i,V-1,V})'$ be of dimension
$q\times 1$, where $q=\frac{V\times(V-1)}{2}$. Assume $\by=(y_1,...,y_n)'\in\mathbb{R}^n$ and $\bX=(\bx_1:\cdots:\bx_n)'$ is an $n\times q$ matrix. Further, assume $\bW=(\bu_1'\bLambda\bu_2,...,\bu_1'\bLambda\bu_V,....,\bu_{V-1}'\bLambda\bu_V)'$, $\bD=diag(s_{1,2},...,s_{V-1,V})$ and $\bgamma=(\gamma_{1,2},...,\gamma_{V-1,V})'$.
Thus, with $n$ data points, the hierarchical model with the Bayesian Network Lasso prior can be written as
\begin{align*}
&\qquad\qquad\qquad\qquad\qquad\by \sim \mathrm{N}(\mu+\bX\bgamma,\tau^2\bI)\\
&\bgamma \sim \mathrm{N}(\bW,\tau^2\bD),\:\:
(\mu,\tau^2) \sim \pi(\mu,\tau^2) \propto \frac{1}{\tau^2},\:\:
\bu_k|\xi_k=1 \sim N(\bu_k \given \bzero, \bM),\:\bu_k|\xi_k=0\sim\delta_{\bzero},\:\mu\sim flat() \\
& s_{k,l} \sim Exp(\theta^2/2),\:\:\:\:\theta^2 \sim Gamma(\zeta,\iota),\:\:
\bM \sim IW(\bS,\nu),\:\:
\Delta \sim Beta(a_{\Delta},b_{\Delta}),\xi_k\sim Ber(\Delta)\\
&\qquad\qquad\qquad \lambda_r \sim Ber (\pi_{r}), \:\:
 \pi_{r} \sim Beta(1, r^{\eta}),\:\eta>1.
\end{align*}

The hierarchical model specified above leads to straightforward Gibbs sampling with full conditionals obtained as following:
\begin{itemize}
\item $\mu \given - \sim N\left(\frac{{\boldsymbol 1}'(\by-\bX\bgamma)}{n},\frac{\tau^2}{n}\right)$
\item $\bgamma \given - \sim N(\bmu_{\bgamma \given \cdot}, \bSigma_{\bgamma \given \cdot})$,
where $\bmu_{\bgamma \given \cdot} = {(\bX'\bX + \bD^{-1})}^{-1}(\bX'(\by-\mu{\boldsymbol 1}) + \bD^{-1}\bW)$ and $\bSigma_{\bgamma \given \cdot} = \tau^2 {(\bX'\bX + \bD^{-1})}^{-1}$
\item
$\tau^2 \given - \sim IG\left[(\frac{n}{2} + \frac{V(V-1)}{4}), \frac{(\by-\mu{\boldsymbol 1}-\bX\bgamma)'(\by-\bX\bgamma) + (\bgamma - \bW)'\bD^{-1}(\bgamma - \bW)}{2} \right]$

\item $s_{k,l} \given - \sim GIG\left[\frac{1}{2}, \frac{(\gamma_{k,l} - \bu_k '\bLambda \bu_l)^2}{\tau^2}, \theta^2 \right]$, where GIG denotes the generalized inverse Gaussian distribution.
\item $\theta^2 \given - \sim Gamma\left[ \left({\zeta} + \frac{V(V-1)}{2}\right), \left(\iota + \sum_{k < l} \frac{s_{k,l}}{2} \right) \right]$
\item $\bu_k \given - \sim   w_{\bu_k} \: \delta_0 (\bu_k)  +  (1 - w_{\bu_k}) \: N(\bu_k \given \bm_{\bu_k}, \bSigma_{\bu_k})$, where
$\bU^{\ast}_k=(\bu_1:\cdots:\bu_{k-1}:\bu_{k+1}:\cdots:\bu_{V})' \bLambda,\:\:\bH_k=diag(s_{1,k},...,s_{k-1,k},s_{k,k+1},...,s_{k,V}),\:\:\bgamma_k=(\gamma_{1,k},...,\gamma_{k-1,k},\gamma_{k,k+1},...,\gamma_{k,V})$, and
\begin{align*}
&\bSigma_{\bu_k} = \left(\bU^{\ast'}_h\bH_k^{-1}\bU^{\ast}_k/\tau^2+\bM^{-1}\right)^{-1},\:\:\bm_{\bu_k}=\bSigma_{\bu_k}\bU^{\ast'}_k\bH_k^{-1}\bgamma_k/\tau^2\\
& w_{\bu_k} = \frac{(1-\pi)N(\bgamma_k\given \bzero,\tau^2\bH_k)}{(1-\pi)N(\bgamma_k\given \bzero,\tau^2\bH_k)+\pi N(\bgamma_k\given \bzero,\tau^2\bH_k+\bU^{\ast}_k\bM\bU^{\ast'}_k)}
\end{align*}
\item $\xi_k|-\sim Ber(1 - w_{\bu_k})$
\item $\Delta \given - \sim Beta\left[(a_{\Delta} + \sum_{k = 1}^{V} \xi_k),   (b_{\Delta} + \sum_{k = 1}^{V} (1 - \xi_k))\right]$.
\item $\bM \given - \sim IW [(\bS + \sum_{k : \bu_k \neq \bzero} \bu_k\bLambda \bu_k'),(\nu + \{\#k : \bu_k \neq \bzero \}) ]$.
\item $\lambda_r \given - \sim Ber(p_{\lambda_r})$, where
       $p_{\lambda_r}=\frac{\pi_{r}N(\bgamma\given \bW_1,\tau^2\bD)}{\pi_{r}N(\bgamma\given \bW_1,\tau^2\bD)+
       (1-\pi_{r})N(\bgamma\given \bW_0,\tau^2\bD)}$. Here \\ $\bW_1=(\bu_1'\bLambda_1\bu_2,...,\bu_1'\bLambda_1\bu_V,....,\bu_{V-1}'\bLambda_1\bu_V)'$, $\bW_0=(\bu_1'\bLambda_0\bu_2,...,\bu_1'\bLambda_0\bu_V,....,\bu_{V-1}'\bLambda_0\bu_V)'$,
       $\bLambda_1=diag(\lambda_1,..,\lambda_{r-1},1,\lambda_{r+1},..,\lambda_R)$, $\bLambda_0=diag(\lambda_1,..,\lambda_{r-1},0,\lambda_{r+1},..,\lambda_R)$, for $r=1,..,R$.
\item $\pi_{r}\given - \sim Beta(\lambda_r+1,1-\lambda_r+r^{\eta})$, for $r=1,..,R$.
\end{itemize}

\section*{Appendix B}\label{appB}
This section shows the posterior propriety of the parameters in the BNR model. Without loss of generality, we set $\mu=0$ while proving the posterior propriety. To begin with, we state a number of useful lemmas.
\subsection*{Preliminary Results}
\begin{lemma}\label{lem1}
If $\bC$ is an $h\times h$ non-negative definite matrix, then $|\bC+\bI|\geq 1$.
\end{lemma}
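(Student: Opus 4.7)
The plan is to reduce the determinant inequality to a statement about the eigenvalues of $\bC$ via the spectral theorem, after which the bound follows immediately.

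First I would use the fact that a non-negative definite matrix is symmetric (here I assume the standard convention that ``non-negative definite'' includes symmetry) and therefore admits a spectral decomposition $\bC = \bQ \bLambda \bQ'$, where $\bQ$ is orthogonal and $\bLambda = \mathrm{diag}(\lambda_1, \ldots, \lambda_h)$ with $\lambda_i \geq 0$ for every $i$. Since $\bI = \bQ \bQ'$, I can write
\begin{equation*}
\bC + \bI = \bQ(\bLambda + \bI)\bQ'.
\end{equation*}

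Next I would take determinants on both sides. Because $|\bQ| |\bQ'| = |\bQ \bQ'| = 1$, this gives
\begin{equation*}
|\bC + \bI| = |\bLambda + \bI| = \prod_{i=1}^{h}(1 + \lambda_i).
\end{equation*}
Each factor satisfies $1 + \lambda_i \geq 1$ since $\lambda_i \geq 0$, so the product is bounded below by $1$, completing the argument.

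There is no real obstacle here: the lemma is essentially a one-line consequence of the spectral theorem once one rewrites $|\bC + \bI|$ as a product of shifted eigenvalues. The only subtle point worth flagging explicitly in the write-up is justifying the existence of the orthogonal diagonalization, which requires that $\bC$ be symmetric; this should either be stated as part of the hypothesis or noted as implicit in the definition of non-negative definiteness used in the paper.
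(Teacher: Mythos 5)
Your proof is correct and follows essentially the same route as the paper's: both arguments reduce $|\bC+\bI|$ to the product $\prod_i(1+\lambda_i)$ of shifted eigenvalues and use non-negative definiteness to bound each factor below by $1$. Making the orthogonal diagonalization explicit is a fine (slightly more detailed) way of justifying the eigenvalue shift that the paper simply asserts.
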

\begin{proof}
The eigenvalues of $(\bC+\bI)$ are given by $\varphi_1+1,...,\varphi_h+1$, where $\varphi_1,...,\varphi_h$ are eigenvalues of
$\bC$. Since $\bC$ is non-negative definite, $\varphi_1\geq 0,...,\varphi_h\geq 0$. The result follows from the fact that
$|\bC+\bI|=\prod_{l=1}^{h}(\varphi_l+1)$ is the product of eigenvalues.
\end{proof}

\begin{lemma}\label{lem2}
Let $\bC$ be an $h\times h$ diagonal matrix with diagonal entries $c_1,...,c_h$ all greater than $0$. Suppose $\bA$ is an $n\times h$ matrix
with the largest eigenvalue of $\bA\bA'$ given by $\mu_{\bA\bA'}$. Then
$\bA\bC\bA'+\bI\leq \left(\mu_{\bA\bA'}\sum_{l=1}^{h}c_l+1\right)\bI$, where $\bH_1\leq \bH_2$ implies $\bH_2-\bH_1$ is a positive definite
matrix.
\end{lemma}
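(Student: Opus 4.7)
The plan is to decompose $\bA\bC\bA'$ column-wise and bound each rank-one summand separately. Write $\bA=(\ba_1:\cdots:\ba_h)$ in terms of its columns; since $\bC=\mathrm{diag}(c_1,\ldots,c_h)$, we obtain the clean identity
\[
\bA\bC\bA' \;=\; \sum_{l=1}^{h} c_l\,\ba_l\ba_l'.
\]
Each $\ba_l\ba_l'$ is a rank-one symmetric matrix whose only nonzero eigenvalue is $\|\ba_l\|^2$, so for every vector $\bv$ one has $\bv'(\ba_l\ba_l')\bv=(\bv'\ba_l)^2\le \|\ba_l\|^2\|\bv\|^2$, giving $\ba_l\ba_l'\le \|\ba_l\|^2\bI$ in the Loewner order. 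Since $c_l>0$, summing yields $\bA\bC\bA'\le \bigl(\sum_l c_l\|\ba_l\|^2\bigr)\bI$.

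Next I would replace each $\|\ba_l\|^2$ by the uniform upper bound $\mu_{\bA\bA'}$. The key observation is that $\|\ba_l\|^2=\ba_l'\ba_l$ is precisely the $(l,l)$ diagonal entry of the positive semidefinite matrix $\bA'\bA$. For any positive semidefinite matrix $\bM$, the inequality $M_{ll}=\be_l'\bM\be_l\le \lambda_{\max}(\bM)\,\|\be_l\|^2=\lambda_{\max}(\bM)$ follows directly from the variational characterization of the largest eigenvalue. Applied here, $\|\ba_l\|^2\le \lambda_{\max}(\bA'\bA)$. Finally, $\bA'\bA$ and $\bA\bA'$ share the same nonzero eigenvalues, so $\lambda_{\max}(\bA'\bA)=\lambda_{\max}(\bA\bA')=\mu_{\bA\bA'}$ (the trivial case $\bA=\bzero$ needing no argument).

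Chaining the two bounds gives
\[
\bA\bC\bA'\;\le\;\Bigl(\sum_{l=1}^{h} c_l\|\ba_l\|^2\Bigr)\bI\;\le\;\mu_{\bA\bA'}\Bigl(\sum_{l=1}^{h} c_l\Bigr)\bI,
\]
and adding $\bI$ to both sides concludes the proof. Since both steps are monotone in the Loewner order, no subtlety beyond the column-wise expansion and the standard diagonal-versus-eigenvalue bound is required.

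There is no real obstacle here; the only point that needs a careful sentence is the identification $\lambda_{\max}(\bA'\bA)=\mu_{\bA\bA'}$, which follows from the shared nonzero spectrum of $\bA\bA'$ and $\bA'\bA$ and is exactly what lets the bound be expressed in terms of the eigenvalues of $\bA\bA'$ as stated in the lemma.
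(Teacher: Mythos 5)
Your proof is correct, but it takes a different route than the paper. The paper argues in two Loewner steps on whole matrices: first $\bA\bC\bA'\leq\bigl(\sum_{l=1}^{h}c_l\bigr)\bA\bA'$ (since $\bC\leq(\sum_l c_l)\bI$), and then $\bA\bA'\leq\mu_{\bA\bA'}\bI$ via the eigendecomposition $\bA\bA'=\bLambda\bH\bLambda'$, so it never leaves the matrix $\bA\bA'$ in terms of which the constant is stated. You instead expand $\bA\bC\bA'=\sum_l c_l\,\ba_l\ba_l'$ column-wise, bound each rank-one term by $c_l\|\ba_l\|^2\bI$ via Cauchy--Schwarz, and then convert $\|\ba_l\|^2=(\bA'\bA)_{ll}\leq\lambda_{\max}(\bA'\bA)=\mu_{\bA\bA'}$ using the shared nonzero spectrum of $\bA'\bA$ and $\bA\bA'$. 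Both arguments are sound; yours is slightly more elementary (no eigendecomposition, only the variational characterization of $\lambda_{\max}$) and in passing gives the sharper intermediate bound $\bA\bC\bA'\leq\bigl(\sum_l c_l\|\ba_l\|^2\bigr)\bI=\mathrm{Trace}(\bC\bA'\bA)\,\bI$, while the paper's version is shorter precisely because it avoids the extra identification of $\lambda_{\max}(\bA'\bA)$ with $\mu_{\bA\bA'}$ that you correctly flag as the one point needing care. One shared caveat: like the paper's proof, yours really establishes that the difference is non-negative definite (equality can occur along the top eigendirection), whereas the lemma's stated convention reads ``positive definite''; since the downstream use (Lemma~\ref{lem3}) only needs non-negative definiteness, this is a quirk of the lemma's wording rather than a gap in your argument.
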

\begin{proof}
Since $c_1,...,c_h>0$, $\bA\bC\bA'\leq (\sum_{l=1}^{h}c_l)\bA\bA'$. Consider the spectral decomposition of the matrix $\bA\bA'$. Let
the eigen-decomposition of $\bA\bA'=\bLambda\bH\bLambda'$, where $\bLambda$ is the matrix of eigenvectors and $\bH$ is a diagonal matrix
with diagonal entries $\mu_1,...,\mu_n$. Since each $\mu_i\leq \mu_{\bA\bA'}$,
$\bA\bA'\leq\mu_{\bA\bA'}\bLambda\bLambda'=\mu_{\bA\bA'}\bI$. Thus,
$\bA\bC\bA'\leq (\sum_{l=1}^{h}c_l)\mu_{\bA\bA'}\bI$. Hence $\bA\bC\bA'+\bI\leq \left(\mu_{\bA\bA'}\sum_{l=1}^{h}c_l+1\right)\bI$.
\end{proof}

\begin{lemma}\label{lem3}
Suppose $\bz$ is an $h\times 1$ vector and $\bA$ is an $h\times h$ symmetric positive definite matrix. Let $\bB$ be another $h\times h$ positive
definite matrix such that $\bA\geq\bB$ (where $\bA\geq\bB$ implies $\bA-\bB$ is non-negative definite). Then $\bz'\bA^{-1}\bz\leq \bz'\bB^{-1}\bz$.
\end{lemma}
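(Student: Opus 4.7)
The plan is to establish the standard monotonicity of matrix inversion on the positive-definite cone, namely that $\bA \geq \bB \succ 0$ implies $\bB^{-1} \geq \bA^{-1}$, and then to conclude via the quadratic form.

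First, since $\bB$ is symmetric positive definite, it admits a symmetric square root $\bB^{1/2}$ with $\bB^{-1/2}$ well defined. I would pre- and post-multiply the hypothesis $\bA - \bB \geq 0$ by $\bB^{-1/2}$: congruence by a non-singular matrix preserves non-negative definiteness, so $\bB^{-1/2}(\bA - \bB)\bB^{-1/2} = \bB^{-1/2}\bA\bB^{-1/2} - \bI \geq 0$, i.e., $\bB^{-1/2}\bA\bB^{-1/2} \geq \bI$. All eigenvalues of this symmetric matrix are therefore at least $1$.

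Next I would invert. Since $\bB^{-1/2}\bA\bB^{-1/2}$ is symmetric positive definite with all eigenvalues $\geq 1$, its inverse $\bB^{1/2}\bA^{-1}\bB^{1/2}$ is symmetric positive definite with all eigenvalues $\leq 1$, hence $\bB^{1/2}\bA^{-1}\bB^{1/2} \leq \bI$. Congruence by $\bB^{-1/2}$ again then yields $\bA^{-1} \leq \bB^{-1}$, i.e., $\bB^{-1} - \bA^{-1}$ is non-negative definite.

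Finally, applying this matrix inequality to the quadratic form with $\bz$ gives $\bz'(\bB^{-1} - \bA^{-1})\bz \geq 0$, which is exactly $\bz'\bA^{-1}\bz \leq \bz'\bB^{-1}\bz$. There is no real obstacle here; the only subtlety to handle carefully is ensuring that the square root $\bB^{1/2}$ exists and is invertible, which follows immediately from the positive definiteness of $\bB$ via its spectral decomposition. The argument is entirely self-contained and does not rely on the earlier lemmas.
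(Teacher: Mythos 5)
Your proposal is correct and follows essentially the same route as the paper: both establish the antitonicity $\bA\geq\bB\succ 0 \Rightarrow \bA^{-1}\leq\bB^{-1}$ by the congruence $\bB^{-1/2}\bA\bB^{-1/2}\geq\bI$ and an eigenvalue argument, then finish with the quadratic form in $\bz$. The only cosmetic difference is that you invert $\bB^{-1/2}\bA\bB^{-1/2}$ directly and congruence back by $\bB^{-1/2}$, whereas the paper passes through $\bA^{1/2}\bB^{-1}\bA^{1/2}$ via the fact that $\bX\bY$ and $\bY\bX$ share eigenvalues; both steps are equivalent in substance.
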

\begin{proof}
$\bA\geq\bB$ implies $\bB^{-1/2}\bA\bB^{-1/2}\geq\bI$. Thus all eigenvalues of $\bB^{-1/2}\bA\bB^{-1/2}=\bB^{-1/2}\bA^{1/2}\bA^{1/2}\bB^{-1/2}$ are greater than or equal to 1. Since commuting the product of two matrices does not change the eigenvalues, $\bA^{1/2}\bB^{-1}\bA^{1/2}$ has all eigenvalues greater than or equal to 1. Thus $\bA^{1/2}\bB^{-1}\bA^{1/2}\geq\bI$, which implies  $\bA^{-1}\leq\bB^{-1}$. Then $\bz'\bA^{-1}\bz\leq \bz'\bB^{-1}\bz$.
\end{proof}

\subsection*{Main Result}
Note that the posterior distribution of the parameters is given by
\begin{multline*}
p(\bgamma,\tau^2,\bu_1,..,\bu_V,\xi_1,..,\xi_V,\lambda_1,..,\lambda_R,\theta^2,\Delta,\{s_{k,l}\}_{k<l},\pi_1,...,\pi_R,\bM\given \by,\bX)\\
\propto
\mathrm{N}(\by\given\bX\bgamma,\tau^2\bI)\times\mathrm{N}(\bgamma\given\bW,\tau^2\bD)\times
\frac{1}{\tau^2}\times
\prod\limits_{k=1}^{V}\left[\xi_k N(\bu_k \given \bzero, \bM)+(1-\xi_k)\delta_{\bzero}\right]\\
\qquad\times\prod\limits_{k<l} Exp(s_{k,l}\given\theta^2/2)\times
Gamma(\theta^2 \given \zeta,\iota)\times IW(\bM \given\bS,\nu)\times Beta(\Delta \given a_{\Delta},b_{\Delta})\\
\qquad\times \prod\limits_{r=1}^{R}\left[Ber (\lambda_r \given \pi_{r}) \times Beta(\pi_{r}\given 1, r^{\eta})\right]\times \prod_{k=1}^{V}Ber(\xi_k\given\Delta).
\end{multline*}
Integrating over $\xi_1,...,\xi_V$
\begin{multline*}
p(\bgamma,\tau^2,\bu_1,..,\bu_V,\lambda_1,..,\lambda_R,\theta^2,\Delta,\{s_{k,l}\}_{k<l},\pi_1,...,\pi_R,\bM\given \by,\bX)\propto
\mathrm{N}(\by\given\bX\bgamma,\tau^2\bI)\times\\
\qquad \mathrm{N}(\bgamma\given\bW,\tau^2\bD)\times\frac{1}{\tau^2}\times
\prod\limits_{k=1}^{V}\left[\Delta N(\bu_k \given \bzero, \bM)+(1-\Delta)\delta_{\bzero}\right]\times\prod\limits_{k<l} Exp(s_{k,l}\given\theta^2/2)\times \\
\qquad Gamma(\theta^2 \given \zeta,\iota)\times IW(\bM \given\bS,\nu)\times Beta(\Delta \given a_{\Delta},b_{\Delta})\times
\prod\limits_{r=1}^{R}\left[Ber(\lambda_r \given \pi_{r}) \times  Beta(\pi_{r}\given 1, r^{\eta})\right].
\end{multline*}
Further integrating over $\pi_1,...,\pi_R$ yields,
\begin{multline*}
p(\bgamma,\tau^2,\bu_1,..,\bu_V,\lambda_1,..,\lambda_R,\theta^2,\Delta,\{s_{k,l}\}_{k<l},\bM\given \by,\bX)\propto
\mathrm{N}(\by\given\bX\bgamma,\tau^2\bI)\times \mathrm{N}(\bgamma\given\bW,\tau^2\bD)\times\\
\qquad \frac{1}{\tau^2}\times
\prod\limits_{k=1}^{V}\left[\Delta N(\bu_k \given \bzero, \bM)+(1-\Delta)\delta_{\bzero}\right]\times\prod\limits_{k<l} Exp(s_{k,l}\given\theta^2/2)\times Gamma(\theta^2 \given \zeta,\iota)\times\\
\qquad IW(\bM \given\bS,\nu)\times Beta(\Delta \given a_{\Delta},b_{\Delta})\times
\prod\limits_{r=1}^{R}\frac{\Gamma(\lambda_r+1)\Gamma(1-\lambda_r+r^{\eta})\Gamma(r^{\eta}+1)}{\Gamma(r^{\eta}+2)\Gamma(r^{\eta})}.
\end{multline*}

The prior specifications on $\Delta$ enable it to be bounded within a finite interval of (0,1). Thus in showing the posterior propriety of parameters with unbounded range, it is enough to treat $\Delta$ as constant. We treat it as fixed henceforth.

Note that each $\lambda_r\in\{0,1\}$, hence marginalizing out $\lambda_r$ gives
\begin{multline*}
p(\bgamma,\bLambda,\tau^2,\bu_1,..,\bu_V,\theta^2,\{s_{k,l}\}_{k<l},\bM\given \by,\bX)\propto
\sum\limits_{\lambda_r\in\{0,1\}}\Big[\mathrm{N}(\by\given\bX\bgamma,\tau^2\bI)\times \mathrm{N}(\bgamma\given\bW,\tau^2\bD)\times\\
\qquad \frac{1}{\tau^2}\times
\prod\limits_{k=1}^{V}\left[\Delta N(\bu_k \given \bzero, \bM)+(1-\Delta)\delta_{\bzero}\right]\times\prod\limits_{k<l} Exp(s_{k,l}\given\theta^2/2)\times Gamma(\theta^2 \given \zeta,\iota)\times\\
\qquad IW(\bM \given\bS,\nu)\times
\prod\limits_{r=1}^{R}\frac{\Gamma(\lambda_r+1)\Gamma(1-\lambda_r+r^{\eta})\Gamma(r^{\eta}+1)}{\Gamma(r^{\eta}+2)\Gamma(r^{\eta})}\Big].
\end{multline*}

Integrating over $\bgamma$, we obtain,
\begin{multline*}
p(\bu_1,..,\bu_V,\tau^2,\theta^2,\{s_{k,l}\}_{k<l},\bM\given \by,\bX)\propto \sum\limits_{\lambda_r\in\{0,1\}}\Big[\frac{1}{(\tau^2)^{n/2+1}|\bX\bD\bX'+\bI|^{1/2}}\times\\
 \qquad \exp\left\{-\frac{(\by-\bX\bW)'(\bX\bD\bX'+\bI)^{-1}(\by-\bX\bW)}{2\tau^2}\right\}\times
\prod\limits_{k=1}^{V}\left[\Delta N(\bu_k \given \bzero, \bM)+(1-\Delta)\delta_{\bzero}\right]\times \\
\qquad\prod\limits_{k<l} Exp(s_{k,l}\given\theta^2/2)\times Gamma(\theta^2 \given \zeta,\iota)\times IW(\bM \given\bS,\nu)\times\\
\prod\limits_{r=1}^{R}\frac{\Gamma(\lambda_r+1)\Gamma(1-\lambda_r+r^{\eta})\Gamma(r^{\eta}+1)}{\Gamma(r^{\eta}+2)\Gamma(r^{\eta})}\Big].
\end{multline*}
Next, we integrate w.r.t. $\theta^2$ to obtain
\begin{align}\label{eq:posterior_prop1}
& p(\bu_1,..,\bu_V,\tau^2,\{s_{k,l}\}_{k<l},\bM\given \by,\bX)\propto \sum\limits_{\lambda_r\in\{0,1\}}\Big[\frac{1}{(\tau^2)^{n/2+1}|\bX\bD\bX'+\bI|^{1/2}}\times\nonumber\\
& \qquad \exp\left\{-\frac{(\by-\bX\bW)'(\bX\bD\bX'+\bI)^{-1}(\by-\bX\bW)}{2\tau^2}\right\}\times
\prod\limits_{k=1}^{V}\left[\Delta N(\bu_k \given \bzero, \bM)+(1-\Delta)\delta_{\bzero}\right]\times\nonumber \\
&\qquad\frac{1}{(\iota+\sum\limits_{k<l}s_{k,l})^{q+\zeta}}\times IW(\bM \given\bS,\nu)\times
\prod\limits_{r=1}^{R}\frac{\Gamma(\lambda_r+1)\Gamma(1-\lambda_r+r^{\eta})\Gamma(r^{\eta}+1)}{\Gamma(r^{\eta}+2)\Gamma(r^{\eta})}\Big].
\end{align}
(\ref{eq:posterior_prop1}) is a discrete sum of $2^R$ terms with different combinations of $\lambda_1,...,\lambda_r$. The sum integrated out over all the parameters is finite if the individual summands are finite when integrated out w.r.t all parameters. \\
Denote a representative summand by $p^{*}(\bu_1,..,\bu_V,\tau^2,\{s_{k,l}\}_{k<l},\bM\given \by,\bX)$, where
\begin{multline}
p^{*}(\bu_1,..,\bu_V,\tau^2,\{s_{k,l}\}_{k<l},\bM\given \by,\bX)\propto
\frac{1}{(\tau^2)^{n/2+1}|\bX\bD\bX'+\bI|^{1/2}}\times\nonumber\\
\qquad \exp\left\{-\frac{(\by-\bX\bW)'(\bX\bD\bX'+\bI)^{-1}(\by-\bX\bW)}{2\tau^2}\right\}\times
\prod\limits_{k=1}^{V}\left[\Delta N(\bu_k \given \bzero, \bM)+(1-\Delta)\delta_{\bzero}\right]\times\nonumber \\
\qquad\frac{1}{(\iota+\sum\limits_{k<l}s_{k,l})^{q+\zeta}}\times IW(\bM \given\bS,\nu)\times
\prod\limits_{r=1}^{R}\frac{\Gamma(\lambda_r+1)\Gamma(1-\lambda_r+r^{\eta})\Gamma(r^{\eta}+1)}{\Gamma(r^{\eta}+2)\Gamma(r^{\eta})}.
\end{multline}

Note the fact that $\bD$ is a diagonal matrix with all positive diagonal entries. Thus $\bX\bD\bX'$ is non-negative definite and by using  Lemma~\ref{lem2}
\begin{align*}
\bX\bD\bX'+\bI\leq \bX\bX'\sum_{k<l}s_{k,l}+\bI\leq \left(\mu_{\bX\bX'}\sum_{k<l}s_{k,l}+1\right)\bI,
\end{align*}
where $\bA\leq \bB$ implies $\bA-\bB$ is a non-negative definite matrix and $\mu_{\bX\bX'}$ is the largest eigenvalue of $\bX\bX'$.
Using Lemma~\ref{lem3}, the above inequality implies
\begin{align*}
(\by-\bX\bW)'(\bX\bD\bX'+\bI)^{-1}(\by-\bX\bW)\geq \frac{||\by-\bX\bW||^2}{\mu_{\bX\bX'}\sum_{k<l}s_{k,l}+1}.
\end{align*}
Let
\begin{align}\label{eq:posterior_prop2}
& \tilde{p}(\bu_1,..,\bu_V,\tau^2,\{s_{k,l}\}_{k<l},\bM)= \frac{1}{(\tau^2)^{n/2+1}|\bX\bD\bX'+\bI|^{1/2}}\times\nonumber\\
& \qquad \exp\left\{-\frac{(\by-\bX\bW)'(\bX\bD\bX'+\bI)^{-1}(\by-\bX\bW)}{2\tau^2}\right\}\times
\prod\limits_{k=1}^{V}N(\bu_k \given \bzero, \bM)\times\nonumber \\
&\qquad\frac{1}{(\iota+\sum\limits_{k<l}s_{k,l})^{q+\zeta}}\times IW(\bM \given\bS,\nu)\times
\prod\limits_{r=1}^{R}\frac{\Gamma(\lambda_r+1)\Gamma(1-\lambda_r+r^{\eta})\Gamma(r^{\eta}+1)}{\Gamma(r^{\eta}+2)\Gamma(r^{\eta})}.
\end{align}
With little algebra it can be shown that
\begin{align*}
& p^{*}(\bu_1,..,\bu_V,\tau^2,\{s_{k,l}\}_{k<l},\bM\given \by,\bX)\\
&=\mbox{constant}\times\sum\limits_{1\leq j_1,...,j_l\leq V, 0\leq l\leq V}\Delta^l(1-\Delta)^{V-l}
\tilde{p}(\bu_{j_1},..,\bu_{j_l},\bu_{j_{l+1}}=0,..,\bu_{j_V}=0,\tau^2,\{s_{k,l}\}_{k<l},\bM).
\end{align*}
Therefore, the integral of (\ref{eq:posterior_prop1}) w.r.t. all parameters is finite if and only if 
\begin{align*}
\int \tilde{p}(\bu_1,..,\bu_V,\tau^2,\{s_{k,l}\}_{k<l},\bM) d\bu_1\cdots d\bu_Vd\tau^2 d\bM d\prod\limits_{k<l}s_{k,l}  <\infty.
\end{align*}
Henceforth, we will proceed to show that this integral is finite.

With little algebra, we have that 
\begin{align*}
\int IW(\bM|\nu,\bS)\prod\limits_{k=1}^{V}N(\bu_k \given \bzero, \bM)d\bM\propto\frac{1}{|\bS+\sum_{k=1}^{V}\bu_k\bu_k'|^{(\nu+V)/2}}. 
\end{align*}
Hence,
\begin{multline*}
\tilde{p}(\bu_1,..,\bu_V,\tau^2,\{s_{k,l}\}_{k<l})\leq \mbox{constant}\times\frac{1}{|\bS+\sum_{k=1}^{V}\bu_k\bu_k'|^{(\nu+V)/2}}\frac{1}{(\tau^2)^{n/2+1}}\times\nonumber\\
\qquad \exp\left\{-\frac{||\by-\bX\bW||^2}{2\tau^2(\mu_{\bX\bX'}\sum_{k<l}s_{k,l}+1)}\right\}\times
\frac{1}{(\iota+\sum\limits_{k<l}s_{k,l})^{q+\zeta}}\frac{1}{|\bX\bD\bX'+\bI|^{1/2}}\times\\
\qquad \prod\limits_{r=1}^{R}\frac{\Gamma(\lambda_r+1)\Gamma(1-\lambda_r+r^{\eta})\Gamma(r^{\eta}+1)}{\Gamma(r^{\eta}+2)\Gamma(r^{\eta})}.
\end{multline*}
Define $\mathcal{A}=\left\{(\bu_1,...,\bu_V):||\by-\bX\bW||^2>1\right\}$. Then
\begin{multline}\label{eq:posterior_prop3}
\int \tilde{p}(\bu_1,..,\bu_V,\tau^2,\{s_{k,l}\}_{k<l})d\bu_1\cdots d\bu_V d\tau^2 d\prod\limits_{k<l}s_{k,l} \\
=\int\limits_{\mathcal{A}}\tilde{p}(\bu_1,..,\bu_V,\tau^2,\{s_{k,l}\}_{k<l}) d\bu_1\cdots d\bu_Vd\tau^2 d\prod\limits_{k<l}s_{k,l}+\nonumber\\
\int\limits_{\mathcal{A}^c}\tilde{p}(\bu_1,..,\bu_V,\tau^2,\{s_{k,l}\}_{k<l})d\bu_1\cdots d\bu_Vd\tau^2 d\prod\limits_{k<l}s_{k,l}.
\end{multline}
Now,
\begin{multline*}
 \int\limits_{\mathcal{A}}\tilde{p}(\bu_1,..,\bu_V,\tau^2,\{s_{k,l}\}_{k<l})d\tau^2 d\prod\limits_{k<l}s_{k,l}d\bu_1\cdots d\bu_V \leq \mbox{constant}\int\limits_{\mathcal{A}}\frac{d\bu_1\cdots d\bu_V}{|\bS+\sum_{k=1}^{V}\bu_k\bu_k'|^{(\nu+V)/2}}\times\\
\qquad \int\frac{1}{(\tau^2)^{n/2+1}}\exp\left\{-\frac{1}{2\tau^2(\mu_{\bX\bX'}\sum_{k<l}s_{k,l}+1)}\right\}\times
\frac{1}{(\iota+\sum\limits_{k<l}s_{k,l})^{q+\zeta}}\frac{d\tau^2 d\prod\limits_{k<l}s_{k,l}}{|\bX\bD\bX'+\bI|^{1/2}}\times\\
\qquad \prod\limits_{r=1}^{R}\frac{\Gamma(\lambda_r+1)\Gamma(1-\lambda_r+r^{\eta})\Gamma(r^{\eta}+1)}{\Gamma(r^{\eta}+2)\Gamma(r^{\eta})}\\
\leq \mbox{constant}\left\{\int\limits_{\mathcal{A}} \frac{1}{|\bS+\sum_{k=1}^{V}\bu_k\bu_k'|^{(\nu+V)/2}}d\bu_1\cdots d\bu_V\right\}\times\\
\left\{\int \frac{(\mu_{\bX\bX'}\sum_{k<l}s_{k,l}+1)^{n/2}}{|\bX\bD\bX'+\bI|^{1/2}(\iota+\sum\limits_{k<l}s_{k,l})^{q+\zeta}} d\prod_{k<l}s_{k,l}\right\}\times
\prod\limits_{r=1}^{R}\frac{\Gamma(\lambda_r+1)\Gamma(1-\lambda_r+r^{\eta})\Gamma(r^{\eta}+1)}{\Gamma(r^{\eta}+2)\Gamma(r^{\eta})}.
\end{multline*}
Note that
\begin{multline*}
\int\limits_{\mathcal{A}} \frac{1}{|\bS+\sum_{k=1}^{V}\bu_k\bu_k'|^{(\nu+V)/2}}d\bu_1\cdots d\bu_V\leq \int\limits_{\mathcal{A}} \frac{1}{\prod_{k=1}^{V}|\bS+\bu_k\bu_k'|^{\nu/2V+1/2}}d\bu_1\cdots d\bu_V\\
\leq \prod_{k=1}^{V}\left(\int\limits_{\mathcal{A}} \frac{1}{|\bS+\bu_k\bu_k'|^{\nu/V+1}}d\bu_k\right)^{1/2},
\end{multline*}
where the first inequality follows from the fact that $|\bS+\sum_{k=1}^{V}\bu_k\bu_k'|\geq |\bS+\bu_k\bu_k'|$ for all $k$. The second inequality is a direct application of the Cauchy-Schwarz inequality. By the ratio test of integrals, this integral is finite if $\int \frac{1}{[(1+u_{k,1})^2\cdots(1+u_{k,R})^2]^{\nu/V+1}}d\bu_k$ is finite. Now use the
 fact that $\int \frac{1}{x^{1+c}}dx<\infty$ for any $c>0$ to argue that $\int \frac{1}{[(1+u_{k,1})^2\cdots(1+u_{k,R})^2]^{2\nu/V+1}}d\bu_k$ is finite.

\noindent Similarly,\\
$\left\{\int \frac{(\mu_{\bX\bX'}\sum_{k<l}s_{k,l}+1)^{n/2}}{|\bX\bD\bX'+\bI|^{1/2}(\iota+\sum\limits_{k<l}s_{k,l})^{q+\zeta}}
d\prod_{k<l}s_{k,l}\right\}\leq \left\{\int \frac{(\mu_{\bX\bX'}\sum_{k<l}s_{k,l}+1)^{n/2}}{(\mu_{\bX\bX',min}\min_{k<l}s_{k,l}+1)^{n/2}(\iota+\sum\limits_{k<l}s_{k,l})^{q+\zeta}}
d\prod_{k<l}s_{k,l}\right\},$ where $\mu_{\bX\bX',min}$ is the minimum eigenvalue of $\bX\bX'$. The last inequality follows from the fact that
$\bX\bX'\geq \mu_{\bX\bX',min}\min_{k<l}s_{k,l}\bI$.  $\left\{\int \frac{(\mu_{\bX\bX'}\sum_{k<l}s_{k,l}+1)^{n/2}}{(\mu_{\bX\bX',min}\min_{k<l}s_{k,l}+1)^{n/2}(\iota+\sum\limits_{k<l}s_{k,l})^{q+\zeta}}
d\prod_{k<l}s_{k,l}\right\}$ is finite if and only if $\left\{\int \frac{(\mu_{\bX\bX'}\sum_{k<l}s_{k,l}+1)^{n/2}}{(\mu_{\bX\bX',min}\sum\limits_{k<l}s_{k,l}+1)^{n/2}(\iota+\sum\limits_{k<l}s_{k,l})^{q+\zeta}}
d\prod_{k<l}s_{k,l}\right\}<\infty$, by ratio test of integrals. Since the latter integral is finite, $\int\limits_{\mathcal{A}}\tilde{p}(\bu_1,..,\bu_V,\tau^2,\{s_{k,l}\}_{k<l})d\bu_1\cdots d\bu_V d\tau^2 d\prod\limits_{k<l}s_{k,l}\leq\infty$.

Now consider the expression $\int\limits_{\mathcal{A}^c}\tilde{p}(\bu_1,..,\bu_V,\tau^2,\{s_{k,l}\}_{k<l})d\tau^2 d\prod\limits_{k<l}s_{k,l}d\bu_1\cdots d\bu_V$. It is easy to see that $\mathcal{A}^c=\{(\bu_1,...,\bu_V):||\by-\bX\bW||^2\leq 1\}$ is a bounded set, so that the bounded function \\
$\exp\left\{-\frac{||\by-\bX\bW||^2}{2\tau^2(\mu_{\bX\bX'}\sum_{k<l}s_{k,l}+1)}\right\}$ achieves the maximum value at $\bW=\bW^*$. Thus,
\begin{align*}
& \int\limits_{\mathcal{A}^c}\tilde{p}(\bu_1,..,\bu_V,\tau^2,\{s_{k,l}\}_{k<l})d\bu_1\cdots d\bu_V d\tau^2 d\prod\limits_{k<l}s_{k,l}
\leq \mbox{constant}\int\limits_{\mathcal{A}^c}\frac{1}{|\bS+\sum_{k=1}^{V}\bu_k\bu_k'|^{(\nu+V)/2}}\times\\
& \qquad \exp\left\{-\frac{||y-X\bW^*||^2}{2\tau^2(\mu_{\bX\bX'}\sum_{k<l}s_{k,l}+1)}\right\}\times
\int \frac{1}{(\iota+\sum\limits_{k<l}s_{k,l})^{q+\zeta}}\frac{d\tau^2 d\prod\limits_{k<l}s_{k,l}}{(\tau^2)^{n/2+1}}\frac{1}{|\bX\bD\bX'+\bI|^{1/2}}\\
&\leq \frac{\mbox{constant}}{||y-X\bW^*||^n}\left\{\int\limits_{\mathcal{A}^c} \frac{1}{|\bS+\sum_{k=1}^{V}\bu_k\bu_k'|^{(\nu+V)/2}}d\bu_1\cdots d\bu_V\right\}\times\\
&\qquad
\left\{\int \frac{(\mu_{\bX\bX'}\sum_{k<l}s_{k,l}+1)^{n/2}}{|\bX\bD\bX'+\bI|^{1/2}(\iota+\sum\limits_{k<l}s_{k,l})^{q+\zeta}} d\prod_{k<l}s_{k,l}\right\}<\infty,
\end{align*}
where the last step follows from earlier discussions.

\bibliographystyle{natbib}
\bibliography{sample1}

\begin{thebibliography}{}

\bibitem[Armagan {\em et~al.}(2013)Armagan, Dunson, and
  Lee]{armagan2013generalized}
Armagan, A., Dunson, D.~B., and Lee, J. (2013).
\newblock Generalized double {P}areto shrinkage.
\newblock {\em Statistica Sinica\/}, {\bf 23}(1), 119--143.

\bibitem[Bullmore and Sporns(2009)Bullmore and Sporns]{bullmore2009complex}
Bullmore, E. and Sporns, O. (2009).
\newblock Complex brain networks: Graph theoretical analysis of structural and
  functional systems.
\newblock {\em Nature Reviews. Neuroscience\/}, {\bf 10}(3), 186--198.

\bibitem[Carvalho {\em et~al.}(2010)Carvalho, Polson, and
  Scott]{carvalho2010horseshoe}
Carvalho, C.~M., Polson, N.~G., and Scott, J.~G. (2010).
\newblock The horseshoe estimator for sparse signals.
\newblock {\em Biometrika\/}, {\bf 97}(2), 465--480.

\bibitem[Chatterjee and Lahiri(2010)Chatterjee and
  Lahiri]{chatterjee2010asymptotic}
Chatterjee, A. and Lahiri, S. (2010).
\newblock Asymptotic properties of the residual bootstrap for lasso estimators.
\newblock {\em Proceedings of the American Mathematical Society\/}, {\bf
  138}(12), 4497--4509.

\bibitem[Chatterjee and Lahiri(2011)Chatterjee and
  Lahiri]{chatterjee2011bootstrapping}
Chatterjee, A. and Lahiri, S.~N. (2011).
\newblock Bootstrapping lasso estimators.
\newblock {\em Journal of the American Statistical Association\/}, {\bf
  106}(494), 608--625.

\bibitem[Chavez {\em et~al.}(2004)Chavez, Graff-Guerrero, Garcia-Reyna,
  Vaugier, and Cruz-Fuentes]{chavez2004neurobiology}
Chavez, R., Graff-Guerrero, A., Garcia-Reyna, J., Vaugier, V., and
  Cruz-Fuentes, C. (2004).
\newblock Neurobiology of creativity: Preliminary results from a brain
  activation study.
\newblock {\em Salud Mental\/}, {\bf 27}(3), 38--46.

\bibitem[Christakis and Fowler(2007)Christakis and
  Fowler]{christakis2007spread}
Christakis, N.~A. and Fowler, J.~H. (2007).
\newblock The spread of obesity in a large social network over 32 years.
\newblock {\em n engl j med\/}, {\bf 2007}(357), 370--379.

\bibitem[Craddock {\em et~al.}(2009)Craddock, Holtzheimer, Hu, and
  Mayberg]{craddock2009disease}
Craddock, R.~C., Holtzheimer, P.~E., Hu, X.~P., and Mayberg, H.~S. (2009).
\newblock Disease state prediction from resting state functional connectivity.
\newblock {\em Magnetic Resonance in Medicine\/}, {\bf 62}(6), 1619--1628.

\bibitem[De~la Haye {\em et~al.}(2010)De~la Haye, Robins, Mohr, and
  Wilson]{de2010obesity}
De~la Haye, K., Robins, G., Mohr, P., and Wilson, C. (2010).
\newblock Obesity-related behaviors in adolescent friendship networks.
\newblock {\em Social Networks\/}, {\bf 32}(3), 161--167.

\bibitem[Desikan {\em et~al.}(2006)Desikan, S{\'e}gonne, Fischl, Quinn,
  Dickerson, Blacker, Buckner, Dale, Maguire, Hyman, {\em
  et~al.}]{desikan2006automated}
Desikan, R.~S., S{\'e}gonne, F., Fischl, B., Quinn, B.~T., Dickerson, B.~C.,
  Blacker, D., Buckner, R.~L., Dale, A.~M., Maguire, R.~P., Hyman, B.~T., {\em
  et~al.} (2006).
\newblock An automated labeling system for subdividing the human cerebral
  cortex on {MRI} scans into gyral based regions of interest.
\newblock {\em Neuroimage\/}, {\bf 31}(3), 968--980.

\bibitem[Doreian(2001)Doreian]{doreian2001causality}
Doreian, P. (2001).
\newblock Causality in social network analysis.
\newblock {\em Sociological Methods \& Research\/}, {\bf 30}(1), 81--114.

\bibitem[Durante {\em et~al.}(2017)Durante, Dunson, {\em
  et~al.}]{durante2017bayesian}
Durante, D., Dunson, D.~B., {\em et~al.} (2017).
\newblock Bayesian inference and testing of group differences in brain
  networks.
\newblock {\em Bayesian Analysis\/}.

\bibitem[Erdos and R{\'e}nyi(1960)Erdos and R{\'e}nyi]{erdos1960evolution}
Erdos, P. and R{\'e}nyi, A. (1960).
\newblock On the evolution of random graphs.
\newblock {\em Publication of the {M}athematical {I}nstitute of the {H}ungarian
  {A}cademy of {S}ciences\/}, {\bf 5}(1), 17--60.

\bibitem[Finkelstein {\em et~al.}(1991)Finkelstein, Vardi, and
  Hod]{finkelstein1991impulsive}
Finkelstein, Y., Vardi, J., and Hod, I. (1991).
\newblock Impulsive artistic creativity as a presentation of transient
  cognitive alterations.
\newblock {\em Behavioral Medicine\/}, {\bf 17}(2), 91--94.

\bibitem[Flaherty(2005)Flaherty]{flaherty2005frontotemporal}
Flaherty, A.~W. (2005).
\newblock Frontotemporal and dopaminergic control of idea generation and
  creative drive.
\newblock {\em Journal of Comparative Neurology\/}, {\bf 493}(1), 147--153.

\bibitem[Fosdick and Hoff(2015)Fosdick and Hoff]{fosdick2015testing}
Fosdick, B.~K. and Hoff, P.~D. (2015).
\newblock Testing and modeling dependencies between a network and nodal
  attributes.
\newblock {\em Journal of the American Statistical Association\/}, {\bf
  110}(511), 1047--1056.

\bibitem[Fowler and Christakis(2008)Fowler and Christakis]{fowler2008dynamic}
Fowler, J.~H. and Christakis, N.~A. (2008).
\newblock Dynamic spread of happiness in a large social network: longitudinal
  analysis over 20 years in the framingham heart study.
\newblock {\em British Medical Journal\/}, {\bf 337}, a2338.

\bibitem[Frank and Strauss(1986)Frank and Strauss]{frank1986markov}
Frank, O. and Strauss, D. (1986).
\newblock Markov graphs.
\newblock {\em Journal of the {A}merican {S}tatistical {A}ssociation\/}, {\bf
  81}(395), 832--842.

\bibitem[Friedman {\em et~al.}(2010)Friedman, Hastie, and
  Tibshirani]{friedman2010regularization}
Friedman, J., Hastie, T., and Tibshirani, R. (2010).
\newblock Regularization paths for generalized linear models via coordinate
  descent.
\newblock {\em Journal of {S}tatistical {S}oftware\/}, {\bf 33}(1), 1--22.

\bibitem[Gelman {\em et~al.}(2014)Gelman, Carlin, Stern, Dunson, Vehtari, and
  Rubin]{gelman2014bayesian}
Gelman, A., Carlin, J.~B., Stern, H.~S., Dunson, D.~B., Vehtari, A., and Rubin,
  D.~B. (2014).
\newblock {\em Bayesian data analysis\/}, volume~2.
\newblock CRC press Boca Raton, FL.

\bibitem[Gramacy and Gramacy(2013)Gramacy and Gramacy]{gramacy2013package}
Gramacy, R.~B. and Gramacy, M. R.~B. (2013).
\newblock \texttt{R} package \texttt{monomvn}.

\bibitem[Guhaniyogi and Rodriguez(2017)Guhaniyogi and Rodriguez]{guhanrod2017}
Guhaniyogi, R. and Rodriguez, A. (2017).
\newblock Joint modeling of longitudinal relational data and exogenous
  variables.
\newblock {\em
  https://www.soe.ucsc.edu/sites/default/files/technical-reports/UCSC-SOE-17-17.pdf\/}.

\bibitem[Guhaniyogi {\em et~al.}(2017)Guhaniyogi, Qamar, and
  Dunson]{guhaniyogi2017bayesian}
Guhaniyogi, R., Qamar, S., and Dunson, D.~B. (2017).
\newblock Bayesian tensor regression.
\newblock {\em Journal of Machine Learning Research\/}, {\bf 18}(79), 1--31.

\bibitem[Hoff(2005)Hoff]{hoff2005bilinear}
Hoff, P.~D. (2005).
\newblock Bilinear mixed-effects models for dyadic data.
\newblock {\em Journal of the {A}merican {S}tatistical {A}ssociation\/}, {\bf
  100}(469), 286--295.

\bibitem[Hoff(2009)Hoff]{hoff2009hierarchical}
Hoff, P.~D. (2009).
\newblock A hierarchical eigenmodel for pooled covariance estimation.
\newblock {\em Journal of the Royal Statistical Society: Series B (Statistical
  Methodology)\/}, {\bf 71}(5), 971--992.

\bibitem[Hoff {\em et~al.}(2002)Hoff, Raftery, and Handcock]{hoff2002latent}
Hoff, P.~D., Raftery, A.~E., and Handcock, M.~S. (2002).
\newblock Latent space approaches to social network analysis.
\newblock {\em Journal of the {A}merican {S}tatistical {A}ssociation\/}, {\bf
  97}(460), 1090--1098.

\bibitem[Ishwaran and Rao(2005)Ishwaran and Rao]{ishwaran2005spike}
Ishwaran, H. and Rao, J.~S. (2005).
\newblock Spike and slab variable selection: Frequentist and {B}ayesian
  strategies.
\newblock {\em Annals of {S}tatistics\/}, {\bf 33}(2), 730--773.

\bibitem[Jung {\em et~al.}(2010)Jung, Segall, Jeremy~Bockholt, Flores, Smith,
  Chavez, and Haier]{jung2010neuroanatomy}
Jung, R.~E., Segall, J.~M., Jeremy~Bockholt, H., Flores, R.~A., Smith, S.~M.,
  Chavez, R.~S., and Haier, R.~J. (2010).
\newblock Neuroanatomy of creativity.
\newblock {\em Human {B}rain {M}apping\/}, {\bf 31}(3), 398--409.

\bibitem[Kiar {\em et~al.}(2016)Kiar, Gray~Roncal, Mhembere, Bridgeford, Burns,
  and Vogelstein]{kiar2016ndmg}
Kiar, G., Gray~Roncal, W., Mhembere, D., Bridgeford, E., Burns, R., and
  Vogelstein, J. (2016).
\newblock ndmg: Neurodata's {MRI} graphs pipeline.

\bibitem[Kiar {\em et~al.}(2017a)Kiar, Gorgolewski, and
  Kleissas]{kiar2017example}
Kiar, G., Gorgolewski, K., and Kleissas, D. (2017a).
\newblock Example use case of sic with the ndmg pipeline (sic: ndmg).
\newblock {\em GigaScience Database\/}.

\bibitem[Kiar {\em et~al.}(2017b)Kiar, Gorgolewski, Kleissas, Roncal, Litt,
  Wandell, Poldrack, Wiener, Vogelstein, Burns, {\em et~al.}]{kiar2017science}
Kiar, G., Gorgolewski, K.~J., Kleissas, D., Roncal, W.~G., Litt, B., Wandell,
  B., Poldrack, R.~A., Wiener, M., Vogelstein, R.~J., Burns, R., {\em et~al.}
  (2017b).
\newblock Science in the cloud (sic): A use case in {MRI} connectomics.
\newblock {\em Giga {S}cience\/}, {\bf 6}(5), 1--10.

\bibitem[Kolb and Milner(1981)Kolb and Milner]{kolb1981performance}
Kolb, B. and Milner, B. (1981).
\newblock Performance of complex arm and facial movements after focal brain
  lesions.
\newblock {\em Neuropsychologia\/}, {\bf 19}(4), 491--503.

\bibitem[Kyung {\em et~al.}(2010)Kyung, Gill, Ghosh, Casella, {\em
  et~al.}]{kyung2010penalized}
Kyung, M., Gill, J., Ghosh, M., Casella, G., {\em et~al.} (2010).
\newblock Penalized regression, standard errors, and bayesian lassos.
\newblock {\em Bayesian Analysis\/}, {\bf 5}(2), 369--411.

\bibitem[Lin(2010)Lin]{lin2010identifying}
Lin, X. (2010).
\newblock Identifying peer effects in student academic achievement by spatial
  autoregressive models with group unobservables.
\newblock {\em Journal of Labor Economics\/}, {\bf 28}(4), 825--860.

\bibitem[Miller and Milner(1985)Miller and Milner]{miller1985cognitive}
Miller, L. and Milner, B. (1985).
\newblock Cognitive risk-taking after frontal or temporal lobectomy-{II}. {T}he
  synthesis of phonemic and semantic information.
\newblock {\em Neuropsychologia\/}, {\bf 23}(3), 371--379.

\bibitem[Niezink and Snijders(2016)Niezink and Snijders]{niezink2016}
Niezink, N. M.~K. and Snijders, T. A.~B. (2016).
\newblock Co-evolution of social networks and continuous actor attributes.

\bibitem[Nowicki and Snijders(2001)Nowicki and Snijders]{nowicki2001estimation}
Nowicki, K. and Snijders, T. A.~B. (2001).
\newblock Estimation and prediction for stochastic block structures.
\newblock {\em Journal of the {A}merican {S}tatistical {A}ssociation\/}, {\bf
  96}(455), 1077--1087.

\bibitem[Park and Casella(2008)Park and Casella]{park2008bayesian}
Park, T. and Casella, G. (2008).
\newblock The {B}ayesian lasso.
\newblock {\em Journal of the {A}merican {S}tatistical {A}ssociation\/}, {\bf
  103}(482), 681--686.

\bibitem[Polson and Scott(2010)Polson and Scott]{polson2010shrink}
Polson, N.~G. and Scott, J.~G. (2010).
\newblock Shrink globally, act locally: Sparse bayesian regularization and
  prediction.
\newblock {\em Bayesian Statistics\/}, {\bf 9}, 501--538.

\bibitem[Razumnikova(2007)Razumnikova]{razumnikova2007creativity}
Razumnikova, O.~M. (2007).
\newblock Creativity related cortex activity in the remote associates task.
\newblock {\em Brain {R}esearch {B}ulletin\/}, {\bf 73}(1), 96--102.

\bibitem[Reli{\'o}n {\em et~al.}(2017)Reli{\'o}n, Kessler, Levina, and
  Taylor]{relion2017network}
Reli{\'o}n, J. D.~A., Kessler, D., Levina, E., and Taylor, S.~F. (2017).
\newblock Network classification with applications to brain connectomics.
\newblock {\em arXiv preprint arXiv:1701.08140\/}.

\bibitem[Richiardi {\em et~al.}(2011)Richiardi, Eryilmaz, Schwartz,
  Vuilleumier, and Van De~Ville]{richiardi2011decoding}
Richiardi, J., Eryilmaz, H., Schwartz, S., Vuilleumier, P., and Van De~Ville,
  D. (2011).
\newblock Decoding brain states from f{M}{R}{I} connectivity graphs.
\newblock {\em Neuroimage\/}, {\bf 56}(2), 616--626.

\bibitem[Shalizi and Thomas(2011)Shalizi and Thomas]{shalizi2011homophily}
Shalizi, C.~R. and Thomas, A.~C. (2011).
\newblock Homophily and contagion are generically confounded in observational
  social network studies.
\newblock {\em Sociological methods \& research\/}, {\bf 40}(2), 211--239.

\bibitem[Shoham {\em et~al.}(2015)Shoham, Hammond, Rahmandad, Wang, and
  Hovmand]{shoham2015modeling}
Shoham, D.~A., Hammond, R., Rahmandad, H., Wang, Y., and Hovmand, P. (2015).
\newblock Modeling social norms and social influence in obesity.
\newblock {\em Current Epidemiology Reports\/}, {\bf 2}(1), 71--79.

\bibitem[Stuss {\em et~al.}(1985)Stuss, Ely, Hugenholtz, Richard, LaRochelle,
  Poirier, and Bell]{stuss1985subtle}
Stuss, D., Ely, P., Hugenholtz, H., Richard, M., LaRochelle, S., Poirier, C.,
  and Bell, I. (1985).
\newblock Subtle neuropsychological deficits in patients with good recovery
  after closed head injury.
\newblock {\em Neurosurgery\/}, {\bf 17}(1), 41--47.

\bibitem[Tibshirani(1996)Tibshirani]{tibshirani1996regression}
Tibshirani, R. (1996).
\newblock Regression shrinkage and selection via the {L}asso.
\newblock {\em Journal of the Royal Statistical Society. Series B
  (Methodological)\/}, {\bf 58}(1), 267--288.

\bibitem[Watts and Dodds(2009)Watts and Dodds]{watts2009threshold}
Watts, D.~J. and Dodds, P. (2009).
\newblock Threshold models of social influence.
\newblock {\em The Oxford Handbook of Analytical Sociology\/}, pages 475--497.

\bibitem[Zhou {\em et~al.}(2013)Zhou, Li, and Zhu]{zhou2013tensor}
Zhou, H., Li, L., and Zhu, H. (2013).
\newblock Tensor regression with applications in neuroimaging data analysis.
\newblock {\em Journal of the American Statistical Association\/}, {\bf
  108}(502), 540--552.

\end{thebibliography}
\end{document}